\definecolor{darkgreen}{rgb}{0.0,0,0.9}
\newcommand{\gs}[1]{\texttt{G}(#1)}
\numberwithin{equation}{section}
\newtheorem{theorem}{Theorem}[section]
\newtheorem{observation}{Observation}[section]
\newtheorem{lemma}{Lemma}[section]
\newtheorem{corollary}[theorem]{Corollary}
\date{}
\title{On $r$-Guarding Thin Orthogonal Polygons\footnote{The work is supported by Natural Sciences and Engineering Research Council of Canada (NSERC).}}
\author{Therese Biedl}
\author{Saeed Mehrabi}
\affil{Cheriton School of Computer Science

University of Waterloo, Waterloo, Canada.

\texttt{biedl@cs.uwaterloo.ca}, \texttt{smehrabi@uwaterloo.ca}}
\begin{document}

\maketitle

\begin{abstract}
Guarding a polygon with few guards is an old and well-studied problem
in computational geometry.  Here we consider the following variant:
We assume that the polygon is orthogonal and {\em thin} in some sense, and 
we consider a point $p$ to guard a point $q$ if and only if the minimum
axis-aligned rectangle spanned by $p$ and $q$ is inside the polygon.

A simple proof shows that this problem is \textsc{NP}-hard on orthogonal 
polygons with holes, even if the polygon is thin. If there are no holes, then a thin polygon becomes a {\em tree}
polygon in the sense that the so-called dual graph of the polygon is a tree.
It was known that finding the minimum set of $r$-guards is polynomial for tree polygons, but
the run-time was $\tilde{O}(n^{17})$.
We show here that with a different approach the running time
becomes linear, answering a question posed by Biedl et al. (SoCG 2011). 
Furthermore, the approach is much more general, allowing to specify
subsets of points to guard and guards to use, and it generalizes to
polygons with $h$ holes or thickness $K$, becoming fixed-parameter
tractable in $h+K$.
\end{abstract}

\section{Introduction}
\label{sec:introduction}
The art gallery problem is one of the oldest problems studied in computational geometry. In the standard art gallery, introduced by Klee in 1973~\cite{orourke1987}, the objective is to observe a simple polygon $P$ in the plane with the minimum number of point guards, where a point $p\in P$ is seen by a guard if the line segment connecting $p$ to the guard lies entirely inside the polygon. Chv\'{a}tal~\cite{chvatal1975} proved that $\lfloor n/3\rfloor$ point guards are always sufficient and sometimes necessary to guard a simple polygon with $n$ vertices. The art gallery problem is known to be \textsc{NP}-hard on arbitrary polygons~\cite{lee1986} and orthogonal polygons~\cite{dietmar1995}. Even severely restricting the shape of the polygon does not help: the problem remains \textsc{NP}-hard for simple monotone polygons~\cite{krohn2013} and for orthogonal tree polygons 
(defined precisely below) if guards must be at vertices~\cite{tomas2013}.
Further, the art gallery problem is APX-hard on simple polygons \cite{eidenbenz2001}, but some approximation algorithms have been developed \cite{ghosh2010,krohn2013}.

A number of other types of guards have been studied, especially for orthogonal polygons. See for example guarding with sliding cameras~\cite{katz2011,durocherMFCS2013}, guarding with rectangles~\cite{franzblau1984} or with orthogonally convex polygons~\cite{motwani1989}. Also, different types of visibility have been studied, especially for orthogonal polygons: guards could be only seeing along horizontal or vertical lines inside $P$, or along an orthogonal staircase path inside $P$ \cite{motwani1989}, or use $r$-visibility (defined below).

\paragraph{\bf Definitions and Model.} Let $P$ be an orthogonal polygon with $n$ vertices.    
The {\em pixelation} of $P$ (also called {\em dent diagram} \cite{CulbersonReckhow1989} and related
to a {\em rectangleomino} \cite{biedl2011})
is the partition of $P$ obtained by extending a horizontal
and a vertical ray inward at every reflex vertex, and expand it until it hits the
boundary.  Let $\Psi$ be the resulting
set of rectangles that we call \emph{pixels} (also called {\em
basic regions} \cite{worman2007}).  See Figure~\ref{fig:pixelsRectangles}
for an example.  Note that $|\Psi|$ could be quadratic in general.
%The pixelation of $P$ 
%is a similar concept as a polyomino, except
%that pixels needs not be unit squares; this 
%has also been called 
%{\em rectangleomino}  \cite{biedl2011} or {\em dent diagram} \cite{CulbersonReckhow1989}.
We will sometimes interpret the pixelation as a planar graph, with one vertex
at every corner of a pixel and an edge for each side of a pixel.
Define the {\em dual graph} $D$ of a polygon $P$ to be the weak dual graph of
the pixelation of $P$, i.e., $D$ has a vertex for every pixel and two pixels
are adjacent in $D$ if and only if they have a common side.

An orthogonal polygon $P$ is called a \emph{thin polygon} if any pixel-corner lies on the boundary of $P$. It is called a \emph{tree polygon} if 
its dual graph is a tree.  One can easily see that a tree polygon is the same
as a thin polygon that has no holes (see also Lemma~\ref{lem:thinSimpleTree}).
For most of this paper, polygons are assumed to be thin polygons.

We say that point $g$ {\em $r$-guards} a point $p$ if the 
minimum axis-aligned rectangle $R(g,p)$ containing $g$ and $p$ is a 
subset of $P$.
The (standard) {\em rGuarding problem} hence consists of finding
a minimum set $S$ of points such that any point in $P$ is $r$-guarded
by a point in $S$.  However, our results work for a broader problem as
follows.  Let $U\subseteq P$ be the region that we wish to guard.
In particular, we could choose to guard 
only the vertices of $P$, or only the boundary, or only those parts of
the art gallery that truly need to be watched.
Let $\Gamma$ be the set of guards that are allowed to be used (in particular,
we could choose to use only vertices as guards).  In the standard problem,
$\Gamma$ is the set of all points in $P$.  Biedl et al.~\cite{biedl2011}
introduced {\em pixel-guards},
where one guard consists of all the points that belong to 
one pixel (see Figure~\ref{fig:pixelsRectangles}).
Our approach allows pixel-guards, so $\Gamma\subset P\cup \Psi$. 
Now the {\em $(U,\Gamma,P)$-rGuarding problem} consists of finding a
minimum set $S$ of guards in $\Gamma$ such that all of $U$ is
$r$-guarded by some guard in $S$ (or to report that no such set exists).
%We use 
%\{{\texttt{Tree}},{\texttt{Thin}},{\texttt{Simple}},{\texttt{General}}\} in place of
%$P$ to denote the version of the problem restricted to orthogonal polygons that are tree-polygons, thin polygons, simple polygons and arbitrary polygons, respectively.  
%The standard rGuarding problem for thin polygons is the $(P,P,\texttt{Thin})$-rGuarding problem in this notation.

Restricting the region that needs to be guarded exacerbates
some degeneracy-issues for $r$-guarding.
Previous papers were silent about what happens if
rectangle $R(g,p)$ (in the definition of $r$-guarding) is a line
segment.  For example, in Figure~\ref{fig:pixelsRectangles}, does
$g$ guard $u_4$?   Does $u_1$ guard $u_4$?     
This issue can be avoided by
assuming that only the interior of pixels must be guarded
(as seems to have been done by Keil and Worman \cite{worman2007},
e.g.~their Lemma 1 is false for point $u_4$ located in
the pixel $\psi_{10}$ in Figure~\ref{fig:pixelsRectangles}, because $u_4$ sees $q\in P$
but not all points in $\psi_{10}$ do).  When the entire polygon needs to be
guarded, then this is a reasonable restriction since the guards that see
the interior also see the boundary in the limit.
But if only a subset of $P$ must be guarded, then we must clarify how
degeneracies are to be handled.  We say that an axis-aligned
rectangle $R$ is {\em degenerate} if it has area 0 (i.e., is a line
segment) and there exists no rectangle $R'$ with positive area
and $R\subset R' \subseteq P$.  In Figure~\ref{fig:pixelsRectangles},
$R(g,u_4)$ is degenerate while $R(u_1,u_4)$ is not.  
Our approach is broad enough that it can handle
both allowing and disallowing the use of degenerate rectangles 
when defining $r$-guarding.

\begin{figure}[t]
\centering
\includegraphics[page=1,width=0.5\textwidth]{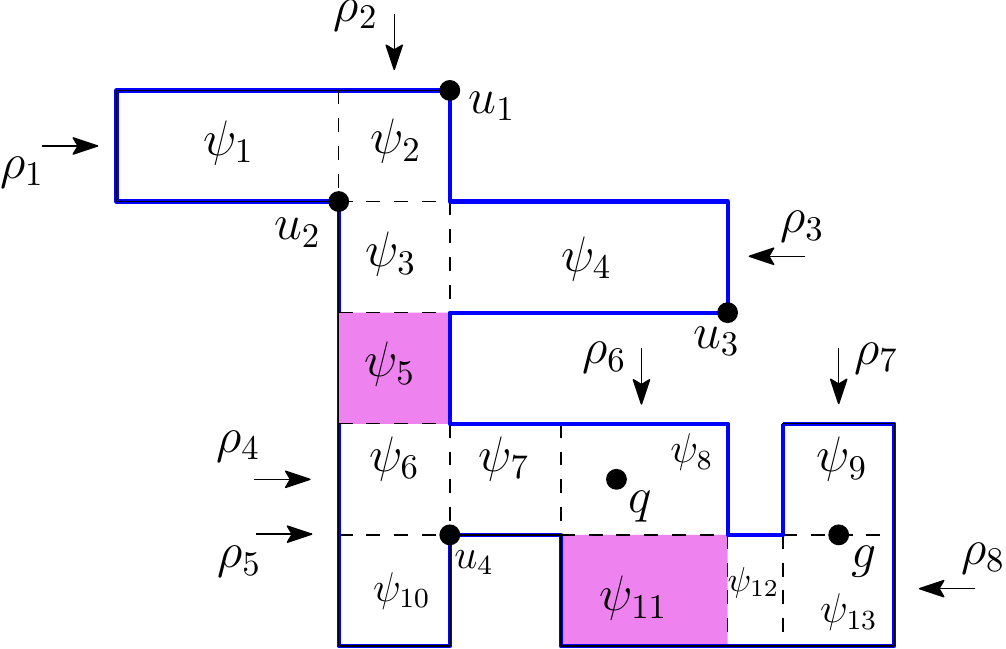}
\caption{A tree polygon with pixels $\{\psi_1,\dots,\psi_{13}\}$ and maximal axis-aligned rectangles $\{\rho_1,\dots,\rho_8\}$; rectangle $\rho_5$ is degenerate.
Pixel-guard $\psi_{5}$ guards $u_3$ via its top-right corner.
%, and the thick red path illustrates the corresponding path in $H$.
}
\label{fig:pixelsRectangles}
\end{figure}
\paragraph{\bf Related Results.}
The problem of guarding orthogonal polygons using $r$-guards was introduced by
Keil~\cite{keil1986} in 1986.
He gave an $O(n^2)$-time exact algorithm for the rGuarding problem for horizontally convex orthogonal polygons. The complexity of 
rGuarding in simple polygons
%the $(P, P, \texttt{Simple})$-rGuarding problem 
was a long-standing open problem until 2007 when Worman and Keil~\cite{worman2007} gave a polynomial-time algorithm for it. 
However, the algorithm by Worman and Keil is quite slow: it runs in $\tilde{O}(n^{17})$-time, where $n$ denotes the number of the vertices of $P$ and $\tilde{O}$ hides a poly-logarithmic factor. As such, Lingas et al.~\cite{Lingas2012} gave a linear-time 3-approximation algorithm for 
%the $(P, P, \texttt{Simple})$-rGuarding problem. 
rGuarding in simple polygons.
Faster exact algorithms are known for a number of special cases of orthogonal polygons
\cite{keil1986,CulbersonReckhow1989,PaliosTzimas2014}.  All these algorithms require the polygon to be simple.
We are not aware of any results concerning the rGuarding problem for polygons with holes,
or if only the vertices or only the boundary need to be guarded or used as guards.

The first results on guarding thin polygons were (to our knowledge)
in \cite{biedl2011}; they studied guarding pixelations and asked
whether this can be done more easily if the dual graph is a tree.  However,
no better results than applying \cite{worman2007} were found.  
Later,
Tomas \cite{tomas2013} showed that indeed guarding tree polygons%
\footnote{Tomas constructs only simple polygons and hence used the term ``thin polygon'' for tree polygons.} 
is \textsc{NP}-hard in the traditional guarding-model (i.e. $g$ guards $p$ if the
line segment $gp$ is in $P$), and if all guards must be at vertices.
The complexity of guarding thin polygons in the $r$-guarding model remained
open.  Paper \cite{biedl2011} was also (apparently) the first paper to
consider pixel-guards in place of point-guards.

\paragraph{\bf Our Results.} 
In this paper, we resolve the complexity of the rGuarding problem on
thin polygons. We show with a simple reduction from Vertex Cover in
planar graphs that this problem is \textsc{NP}-hard on polygons with holes, even if the polygon is thin.
%(this result is given in Appendix~\ref{apx:np} due to space constraints).
As our main result, we show that the rGuarding problem is linear-time solvable on thin
polygons without holes.

Comparing our results to the one by Worman and Keil~\cite{worman2007}, their algorithm works for a
broader class of polygons (they do not require thinness), but is slower.
Moreover, their approach crucially needs that the polygon is simple, 
that the entire polygon needs to be guarded, and that any point in the
polygon can guard.  In contrast to this, our approach generalizes easily
to a number of other scenarios.  First of all, it is not crucial that
the polygon is simple; we can deal with any constant number $h$ of holes.
Secondly, we can choose what to be guarded and what to guard
with; we can hence also solve all art gallery variants where only the
vertices or only the boundary need to be guarded, or where only guards
at the vertices or the boundary are allowed to be used.  Finally, the
restriction on thinness can be relaxed.  We use thinness only to bound
the treewidth of the dual graph of the polygon, and as long as the
treewidth is bounded the approach works.  In particular, if the polygon
is $K$-thin in some sense, and has at most $h$ holes, then for constants
$h$ and $K$ our algorithm is still linear, and the rGuarding problem
hence is fixed-parameter tractable in $h+K$.

\section{NP-hardness}
\label{apx:np}
In this section, we prove that rGuarding is \textsc{NP}-hard in polygons
with holes.
%, both for point-guards and for pixel-guards.  The
The reduction is from Vertex Cover in planar graphs with maximum
degree 3; it is well-known that this is \textsc{NP}-hard~\cite{GareyJ1977}.
So let $G=(V,E)$ be a planar graph with maximum degree 3.  
Let $G^s$ be the graph obtained from $G$ by subdividing every
edge twice.  It is folklore (see e.g.~\cite{Poljak74}) that 
$G$ has a vertex cover of size $k$ if and only if $G^s$ has 
a vertex cover of size $|E|+k$.  
$G$ has a planar orthogonal drawing with at most one bend per edge
(see e.g.~\cite{Kant96}).
By placing one subdivision vertex of each edge at such a bend (if any)
and placing the other subdivision vertex arbitrarily, we hence obtain
a drawing $\Gamma$ of $G^s$ where every vertex
is a point, every edge is a horizontal or vertical line segment, and 
edges are disjoint except at common endpoints.

We construct a polygon $P$ as a ``thickened'' version
of $\Gamma$.
%, and then argue that a point-guard-set of $P$ of size $k$
%gives a vertex cover set of $G^s$ of size $k$ and vice versa; \textsc{NP}-hardness
%hence holds.   
%We construct a second polygon $P'$ with slightly different
%gadgets at the vertices so that the same result holds for pixel-guards.
%
After possible scaling, we may assume that $\Gamma$ resides in an
integer grid with consecutive grid-lines at least $2n$ units apart,
where $n=|V|$.
Replace each horizontal edge $e$ by a rectangle $R_e$ of unit height, 
spanning between the points corresponding to the ends of $e$.
Similarly replace each vertical edge by a rectangle of unit width.  These
rectangles will get moved later, but never so far that they
would overlap edge-rectangles from other rows or columns.

We replace vertex-points by small gadgets as illustrated in
Figure~\ref{fig:NPhard}.
Thus, let $v$ be a vertex of degree 3 in $G^s$; up to rotation it
has incident edges $e_1,e_2,e_3$ on the left, right and top in $\Gamma$.
Replace $v$ by two pixels, attach $R_{e_3}$ at the top of the upper pixel,
$R_{e_1}$ at the left side of the upper pixel and $R_{e_2}$ at the right
side of the lower pixel.  Let $s_v$ be the side common to the two pixels
of $R_v$.    
Rectangles $R_{e_1}$ and $R_{e_2}$ are not
quite horizontally aligned, resulting in one of them being offset from
the grid-line.  However, in total over all vertices in the row, there are at most
$n$ offsets, and so edge-rectangles remain disjoint.
For any vertex of degree 2, omit the
third rectangle  
and also any pixel that is not needed.  
%For pixel-guards, we use three pixels per vertex
%See Figure~\ref{fig:NPhard} for an illustration.  The
%figure also illustrates the construction for pixel-guards, which is
%essentially the same, except that we use a $3\times 1$-rectangle for $R_v$
%and declare $s_v$ to be the middle pixel.  
%The following observation is straightforward:

\begin{figure}[ht]
\hspace*{\fill}
\includegraphics[width=90mm,trim=0 100 0 0,clip]{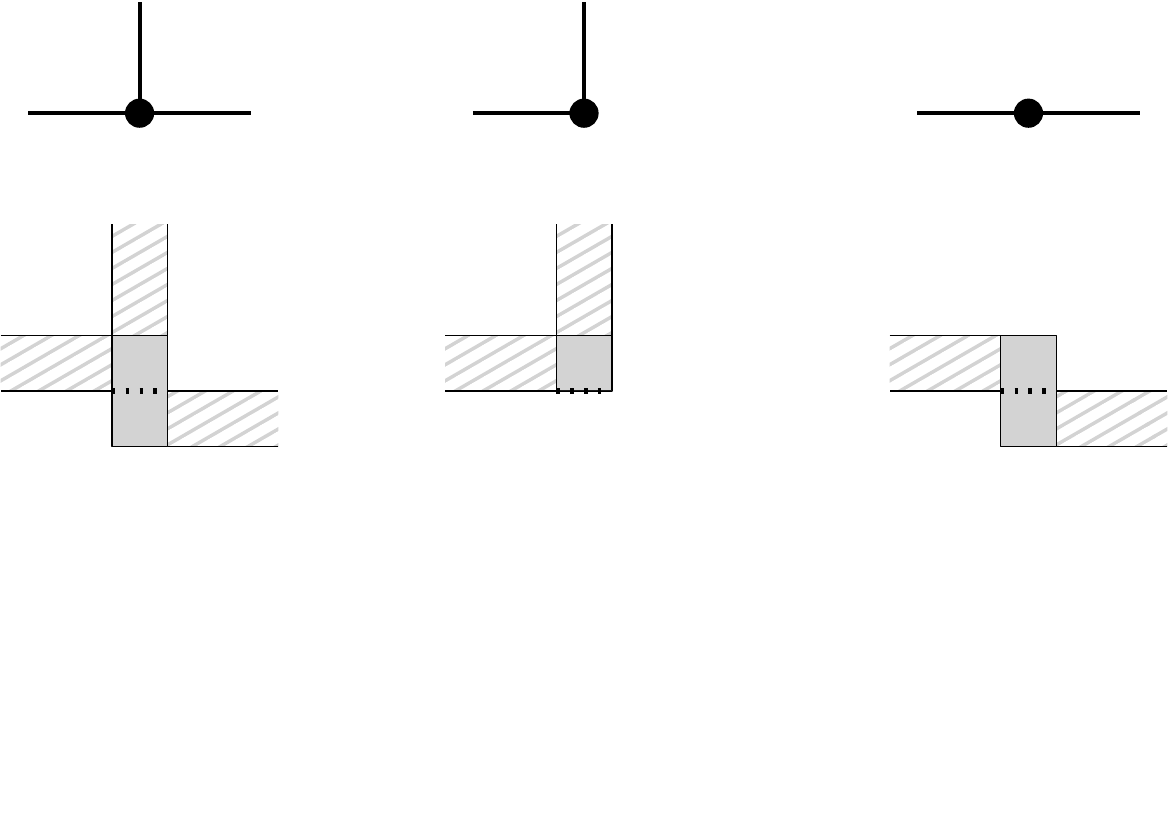}
\hspace*{\fill}
\caption{Converting an orthogonal drawing without bends into a polygon
for rGuarding. 
%with points (middle rows) and pixels (bottom row).  
$R_e$ is hatched, $R_v$ is gray, and $s_v$ is dotted.}
\label{fig:NPhard}
\end{figure}

\begin{observation}
For any vertex $v$, any point in $s_v$ guards the rectangles $R_e$ of any
incident edge $e=(v,w)$, as well as the pixel of $w$ where $R_e$ attaches.

For any edge $e=(v,w)$, if any point in $R_e$ is $r$-guarded from a point
$q$, then $q$ belongs to $R_e$, $R_v$ or $R_w$.
%Let $v$ be a vertex with incident edges $e_1,\dots,e_d$.  
%For $i=1,\dots,d$, let $e_i=(v,w_i)$.  Then we have:
%\begin{itemize}
%\item Any point in $s_v$ (resp. the pixel-guard at $s_v$) $r$-guards all 
%	points in $R_{e_i}$ as well as 
%	pixel of $R_{w_i}$ that is adjacent to $R_{e_i}$.
%\item If any point $p\in R_{e_i}$ is $r$-guarded from a point (resp. pixel)
%	$q$ , then $q\in R_{e_i}\cup R_v \cup R_{w_i}$.
%\end{itemize}
\end{observation}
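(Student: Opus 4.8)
The plan is to read both statements straight off the gadget geometry, using that $P$ is a \emph{thin} polygon so that, away from the vertex gadgets, every edge-rectangle $R_e$ is a single straight unit-width corridor with no part of $P$ lying directly above or below it. I would fix coordinates at a degree-3 vertex $v$: the two pixels of $R_v$ share the side $s_v$, say the horizontal segment at height $y=1$, with the lower pixel occupying heights $[0,1]$ and the upper pixel heights $[1,2]$. By construction the left and top edges attach flush to the upper pixel (and live in the band $[1,2]$) while the right edge attaches flush to the lower pixel (and lives in $[0,1]$), each spanning the full height of its pixel.

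For the first statement I would take an arbitrary $g\in s_v$ and an arbitrary target $p$ in $R_e$ or in the pixel of $w$ where $R_e$ attaches. Since $R_e$ and that far pixel are flush with one pixel of $v$ and share its full height band, the concatenation (relevant pixel of $v$, then $R_e$, then the pixel of $w$) is a single straight band of constant height. As $g$ lies on the common boundary $y=1$, the rectangle $R(g,p)$ has its $y$-extent contained in $[1,2]$ for the upper-pixel edges, or in $[0,1]$ for the right edge, and its $x$-extent swept entirely within that one band; hence $R(g,p)$ never leaves the corridor and so lies in $P$. This is a short check over the three attachment positions.

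For the second statement, suppose some $p\in R_e$ is $r$-guarded by $q$; I want $q\in R_e\cup R_v\cup R_w$, and the cleanest route is to prove the stronger claim that all of $R(q,p)$ lies in $R_e\cup R_v\cup R_w$, which suffices since $q\in R(q,p)$. Assume not, so $R(q,p)$ contains a point outside $R_e\cup R_v\cup R_w$. Because $R(q,p)$ is convex and contained in $P$, and the only neighbours of $R_e$ in $P$ are the gadgets $R_v,R_w$ at its two ends, as we move inside $R(q,p)$ from $p$ toward that outside point, the first time we leave $R_e\cup R_v\cup R_w$ we must enter some other edge-rectangle $R_{e'}$ with $e'\neq e$ incident to $v$ or $w$. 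Thus $R(q,p)$ would contain a point of $R_e$ together with an interior point of $R_{e'}$.

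The crux is then a corner lemma ruling this out, and this is exactly where the two-pixel design pays off: at each vertex the three incident edges are attached to deliberately offset sides, so any two incident edges sit in height bands (or $x$-ranges) staggered by a full pixel. I would check the three unordered pairs and show that the axis-aligned bounding box of an interior point of one and an interior point of the other always contains a corner just outside $P$; for instance, for the left edge in band $[1,2]$ with $x$-coordinate below $0$ and the right edge in band $[0,1]$ with height in $(0,1)$, the bounding box contains the point with the right edge's height but the left edge's $x$-coordinate, which is not in $P$. Hence no axis-aligned rectangle inside $P$ meets two distinct incident edge-rectangles in their interiors, contradicting the previous paragraph. I expect this corner analysis to be the main obstacle, and mostly one of bookkeeping: one must be careful at the pixel boundaries, where a point can belong simultaneously to an edge-rectangle and to $R_v$ or $R_w$, and restrict attention to points strictly interior to $R_{e'}$ so that the offending corner genuinely lies outside $P$.
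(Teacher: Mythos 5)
The paper never actually proves this Observation---it is asserted directly from the gadget geometry of Figure~2---so what matters here is whether your argument is correct on its own. Your proof of the first claim is correct and is exactly the intended justification: $s_v$ lies in the cross-section of each of the three straight bands (pixel of $v$) $\cup\, R_e\, \cup$ (pixel of $w$), so the spanned rectangle can never leave the band.

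The second claim is where you have a genuine gap, and it sits precisely at the step you defer as ``bookkeeping.'' From ``$R(q,p)$ leaves $R_e\cup R_v\cup R_w$'' you conclude that $R(q,p)$ contains an \emph{interior} point of some other corridor $R_{e'}$; that implication fails. In your coordinates the line $y=1$ supporting $s_v$ contains, inside $P$, three collinear pieces: the bottom side of the left corridor $R_{e_1}$, the segment $s_v$, and the top side of the right corridor $R_{e_2}$. A rectangle lying on this line exits $R_{e_2}\cup R_v\cup R_w$ while meeting $R_{e_1}$ only in its boundary, so your corner lemma never gets to act. Worse, the conclusion itself fails on this line: take $p=(1,1)$, the top-left corner of $R_{e_2}$ (so $p\in R_{e_2}$), and $q=(-t,1)$ with $t>0$ on the bottom side of $R_{e_1}$. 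Then $R(q,p)=[-t,1]\times\{1\}$ lies in $P$, and it is not even degenerate in the paper's sense, since it can be fattened upward into $R_{e_1}\cup(\text{upper pixel of } R_v)$; hence $q$ $r$-guards $p$ under either convention, yet $q\notin R_{e_2}\cup R_v\cup R_w$. So your ``stronger claim'' $R(q,p)\subseteq R_e\cup R_v\cup R_w$ is false, and in fact the Observation, read literally for \emph{all} points of $R_e$, is false as well.

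The repair is to restrict $p$ to the interior of $R_e$, which is the version the reduction actually uses (a guard set for all of $P$ must in particular guard interior points of every $R_e$). With $y_p\in(0,1)$ and $x_p>1$, your argument then closes without needing interiority of the point on the other corridor: if $R(q,p)$ contains any point $z$ strictly beyond the attachment side of $R_{e_1}$ (so $x_z<0$, $y_z\in[1,2]$), it contains the corner $(x_z,y_p)$, which lies strictly below the left corridor and strictly left of the gadget, hence outside $P$; if it contains a point beyond the attachment side of $R_{e_3}$ (so $y_z>2$), it contains $(x_p,y_z)$ with $x_p>1$, $y_z>2$, again outside $P$; and symmetrically at $w$. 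This also disposes of degenerate rectangles, since a horizontal segment at height $y_p\in(0,1)$ cannot cross $x=0$, and the staggered attachments prevent any straight pass-through at $w$. So: part one is right, part two needs the interior-point hypothesis both to be provable and, indeed, to be true.
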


Using this observation, the reduction is immediate.  Namely, let $C$ be
a vertex cover of $G_s$ of size $k$.  For any $v\in C$, place a guard
anywhere along $s_v$.  Since $C$ was a vertex cover, this $r$-guards $R_e$
for all edges, and also $R_w$ for all $w\not\in C$ since each pixel of
$R_w$ is attached to some $R_e$.  Vice versa, if we have a 
set $S$ of $r$-guards, then we can create a set $C$ as follows:  For any vertex $v$, if
$R_v$ contains a guard in $S$, then add $v$ to $C$.  For any edge $e=(v,w)$,
if $R_e$ contains a guard in $S$ that is in neither $R_v$ nor $R_w$, then
arbitrarily add one of $v,w$ to $C$.    Clearly $|C|\leq |S|$, and since
any rectangle $R_e$ was guarded, any edge in $E$ is covered by $C$.

Inspection of Figure~\ref{fig:NPhard} shows that the constructed
polygon is thin.  Observe that it has holes, namely,  one per face of $G$. 
Since rGuarding is clearly in \textsc{NP}, we can conclude:

\begin{theorem}
The rGuarding problem is \textsc{NP}-complete on thin polygons.
%, both for point-guards and for pixel-guards.
\end{theorem}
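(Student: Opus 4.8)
The plan is to prove both membership in \textsc{NP} and \textsc{NP}-hardness, reusing the construction and the Observation given above. For membership, I would argue that a candidate solution is a set $S$ of at most $k$ guards whose coordinates can be taken from a polynomial-size set (an $r$-guarding optimum can always be realized at pixel-corners or their grid extensions). Verification then amounts to checking, for each of the polynomially many pixels, that some $g\in S$ $r$-guards it, i.e.\ that $R(g,p)\subseteq P$ for a representative point $p$; this is a polynomial-time test, so the decision version lies in \textsc{NP}.

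For hardness I would assemble the reduction from Vertex Cover in planar graphs of maximum degree $3$ into a clean chain. First I would invoke the folklore equivalence that $G$ has a vertex cover of size $k$ iff the double subdivision $G^s$ has one of size $|E|+k$. Next I would check that the polygon $P$ built by thickening the bend-free orthogonal drawing $\Gamma$ of $G^s$ has size polynomial in $n$ and is computable in polynomial time; the grid spacing of at least $2n$ units ensures that the small per-vertex offsets never push two edge-rectangles into overlap, so $P$ is a well-defined orthogonal polygon. Finally, inspection of the vertex gadget in Figure~\ref{fig:NPhard} shows that every pixel-corner lies on $\partial P$, so $P$ is thin (although it has one hole per bounded face of $G$).

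The correctness of the reduction I would obtain directly from the Observation, in both directions. Given a vertex cover $C$ of $G^s$ with $|C|=|E|+k$, I place one guard on each side $s_v$ with $v\in C$; by the first half of the Observation this $r$-guards every edge-rectangle $R_e$ (as $C$ covers $e$) and in turn every vertex-rectangle $R_w$ (since each of its pixels abuts some guarded $R_e$), so $P$ is guarded with $|E|+k$ guards. Conversely, from any guard set $S$ I extract $C$ by taking every $v$ with a guard in $R_v$ and otherwise charging each guarded edge to one endpoint; the second half of the Observation confines any guard of a point of $R_e$ to $R_e\cup R_v\cup R_w$, which makes $C$ a vertex cover with $|C|\le|S|$. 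The two bounds together show that the minimum guard number of $P$ equals $|E|$ plus the minimum vertex cover of $G$, so a polynomial-time solver for rGuarding would solve planar Vertex Cover.

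The step I expect to be most delicate is not the combinatorial core but the treatment of degeneracies forced by the thinness of the construction: because each $R_e$ is a unit-width rectangle, the witnessing rectangle $R(g,p)$ can collapse to a line segment, and the Observation must remain valid under whichever convention (allowing or disallowing degenerate rectangles) is in force. The hard part is therefore to confirm that the gadget geometry is robust to this choice, so that the reduction certifies \textsc{NP}-hardness uniformly; once this is settled, combining it with membership in \textsc{NP} yields \textsc{NP}-completeness.
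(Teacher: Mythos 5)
Your proposal is correct and follows essentially the same route as the paper: the same reduction from planar Vertex Cover (max degree 3) via the double subdivision $G^s$, the same thickened orthogonal drawing with vertex gadgets, and the same two-directional correctness argument from the Observation, with the guard-count equalling $|E|$ plus the minimum vertex cover of $G$. The only additions are your fleshed-out \textsc{NP}-membership argument (which the paper dismisses as ``clearly in \textsc{NP}'') and your worry about degenerate rectangles, which is harmless here since the standard rGuarding problem requires guarding the whole polygon, so interior points force non-degenerate witnessing rectangles and the convention choice does not affect the reduction.
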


\section{Polygons Whose Dual Has Bounded Treewidth }
\label{sec:treedecomp}

We now show how to solve the rGuarding problem in a tree polygon in
linear time.  In fact, we show something stronger, and prove that the
rGuarding problem can be solved in linear time in any polygon
for which the dual graph $D$ has bounded treewidth, and under any
restriction on the set $U$ to be guarded and the set $\Gamma$ that may
serve as guards.

The approach is to construct an auxiliary graph 
$H$, and argue that solving the rGuarding problem reduces to
a graph problem in $H$. Then we argue that the treewidth of $H$ satisfies $tw(H)\in O(tw(D))$ 
and that the graph problem is linear-time solvable in bounded treewidth graphs.
This auxiliary graph is different from the so-called region-visibility-graph
used by Worman and Keil~\cite{worman2007} in that it encodes who can guard
what, rather than who can be guarded by a common guard.

\subsection{Simplifying $U$ and $\Gamma$}

We first show that we can simplify
the points to guard and the point-guards to use such that only
a constant number of each occur at each pixel.  

\begin{lemma}
\label{lem:simplify}
Let $U\subseteq P$ be any (possibly infinite) set of points in $P$.
Then there exists a finite set of points $U'\subseteq U$ such that
$U'$ is $r$-guarded by a set $S$ if and only if $U$ is.  Moreover,
for any pixel $\psi$, at most 4 points in $U'$ belong to $\psi$.
\end{lemma}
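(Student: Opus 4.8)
The plan is to reduce $r$-guarding to a purely combinatorial condition phrased in terms of maximal axis-aligned rectangles, and then to exploit thinness to show that the ``guarding type'' of a point is nearly constant inside each pixel, so that only a constant number of representatives per pixel are needed.

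First I would record the elementary characterization that $g$ $r$-guards $p$ if and only if some maximal axis-aligned rectangle $\rho\subseteq P$ contains both $g$ and $p$: one direction is immediate since $R(g,p)\subseteq\rho\subseteq P$, and the converse holds because any rectangle inside $P$ extends to a maximal one containing it. Writing $\mathrm{Rect}(p)$ for the set of maximal rectangles containing $p$, a point $p$ is $r$-guarded by $S$ precisely when $S$ meets some $\rho\in\mathrm{Rect}(p)$. Consequently, if $\mathrm{Rect}(p')\subseteq\mathrm{Rect}(p)$ and $p'$ is guarded by $S$, then so is $p$. Since $U'\subseteq U$ already gives one direction of the claimed equivalence, it therefore suffices to choose, inside each pixel $\psi$, one point of $U\cap\psi$ realizing each $\subseteq$-minimal value of $\mathrm{Rect}(\cdot)$ over $U\cap\psi$, and then to bound the number of such minimal values by $4$.

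Next I would use thinness to describe the maximal rectangles meeting a pixel $\psi=[x_1,x_2]\times[y_1,y_2]$. Because no pixel-corner lies in the interior of $P$, the pixelation contains no $2\times2$ block, so each side of $\psi$ is shared with at most one neighbouring pixel; moreover, extending the ray argument that defines the pixelation, a neighbour whose far corner lay strictly inside a shared side would send a ray that splits $\psi$, so each neighbour in fact spans the entire shared side. Hence the horizontal strip $\rho_H$ (resp.\ the vertical strip $\rho_V$) through $\psi$ keeps the full height $[y_1,y_2]$ (resp.\ full width $[x_1,x_2]$) and so contains all of $\psi$, and a maximal rectangle through the interior of $\psi$ cannot be ``thick'': there are at most these two. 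Every interior point of $\psi$ therefore has $\mathrm{Rect}(p)=\{\rho_H,\rho_V\}$, which is the unique $\subseteq$-minimal value over the whole pixel.

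Finally I would treat the points of $U$ lying on the boundary of $\psi$, which is where the constant $4$ genuinely comes from. Such a point still lies in $\rho_H$ and $\rho_V$, but it may lie in additional maximal rectangles --- including degenerate ones running along $\partial P$ --- contributed by the sides and corners of $\psi$; I would argue that these extra rectangles are indexed by the at most four sides/corners, so at most four pairwise-incomparable values of $\mathrm{Rect}(\cdot)$ can be realized by $U\cap\psi$ (a single interior point already dominates everything whenever $U\cap\psi$ meets the interior). Taking one witness point of $U$ per minimal value yields at most $4$ points per pixel, and summing over the finitely many pixels produces the finite set $U'$. I expect the main obstacle to be precisely this last step: carrying out the case analysis of boundary points and degenerate maximal rectangles carefully enough to guarantee the bound of four under either convention for degeneracy, rather than the naive count of one value per side and per corner. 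The structural consequences of thinness from the previous paragraph are exactly what keep this analysis finite and force the bound to be small.
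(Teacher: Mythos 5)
Your overall strategy is viable and genuinely different from the paper's: you characterize $r$-guarding via membership in maximal rectangles, define $\mathrm{Rect}(p)$, and select one representative of each $\subseteq$-minimal value of $\mathrm{Rect}(\cdot)$ per pixel. The logical skeleton is correct ($U'\subseteq U$ gives one direction; $\mathrm{Rect}(p')\subseteq\mathrm{Rect}(p)$ transfers guards from $p'$ to $p$), and your structural claims about thin polygons (full-side sharing between adjacent pixels; only the horizontal and vertical slices through $\psi$ meet its interior, so all interior points of $\psi$ share one $\mathrm{Rect}$-value contained in $\mathrm{Rect}(q)$ for every $q\in\psi$) are provable. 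The paper proceeds differently: it never classifies maximal rectangles here, but instead proves two expansion lemmas (any guard of an interior point of a pixel guards every point of that closed pixel; any guard of a relative-interior point of a pixel side guards every point of that closed side) and then picks representatives with the preference interior $\succ$ side $\succ$ corner.

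However, your proposal stops exactly where the lemma's content lies, and what is missing is a specific idea, not bookkeeping. The boundary of $\psi$ contributes up to eight candidate values (one per side, one per corner), and nothing you state rules out eight pairwise-incomparable realized values; ``indexed by the at most four sides/corners'' is not an argument. The missing fact is a domination relation: the value at a corner $c$ \emph{contains} the value on the relative interior of each incident side $e$. This follows because every non-degenerate maximal rectangle is a union of pixels, so one containing a point of the relative interior of $e$ contains $\psi$ or the neighbor $\psi'$ across $e$, and hence contains $c$ by full-side sharing; and the degenerate maximal segment through $e$ (if it exists and is admitted) contains $e$'s endpoints as well. With this relation, a corner can realize a minimal value only when neither incident side has a point of $U$, and a short case analysis over which sides are realized caps the number of minimal values at $4$; without it you only get $8$. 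This domination relation is precisely what the paper's appendix lemmas (Lemmas~\ref{lem:interior2boundary} and~\ref{lem:boundary2corner}) establish, by directly expanding the guarding rectangle $R(g,p)$ rather than by classifying $\mathcal{R}$. Note also that the paper's expansion argument uses no thinness at all, which matters: Lemma~\ref{lem:simplify} is reused in Section~\ref{sec:generalization} for $K$-thin polygons, where your claim that only $\rho_H$ and $\rho_V$ cross a pixel's interior is false, so your proof would need reworking there while the paper's does not.
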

\begin{proof}
We construct the set $U'$ as follows. 
\begin{itemize}
\item For every pixel $\psi$, if the interior of $\psi$ intersects $U$, then add one point from this intersection into $U'$. 
\item For every pixel-side $e$, if neither incident pixel has a point of $U$ in its interior, but the open set $e$ intersects $U$, 
	then add one point from this intersection to $U'$,
\item For every pixel-corner $c$, if $c\in U$, and if none of the incident pixels or pixel-sides has added a point to $U'$,
	then add $c$ to $U'$.
\end{itemize}
Thus, for any point $p$ in $U$, we have a point $p'$ to $U'$ such that
(i) if $p$ is in the interior of a pixel,
then so is $p'$, (ii) if $p$ is in the interior of some pixel-side $e$
then $p'$ is in the interior of $e$ or of some incident pixel,
and (iii) if $p$ is at a pixel-corner, then $p'$ is at that corner,
in the interior of an incident pixel-side, or in the interior of an incident pixel.
One can show (see Appendix~\ref{apx:preliminA}) that point $p'$ is
more restrictive with respect to guarding, i.e., any guard
that $r$-guards $p'$ also $r$-guards $p$.
\end{proof}

\begin{lemma}
\label{lem:simplifyGuards}
Let $\Gamma \subseteq P$ be any (possibly infinite) set of points in $P$.
Then there exists a finite set of points $\Gamma'\subseteq \Gamma$ 
such that for any pixel $\psi$, at most 4 points in $\Gamma'$ belong to $\psi$.
Moreover, if some set $S \subseteq \Gamma$ $r$-guards a set $U\subseteq P$, then
there exists a set $S'\subseteq \Gamma'$
with $|S'|\leq |S|$ that also $r$-guards $U$.
\end{lemma}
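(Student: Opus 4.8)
The plan is to mirror the construction in Lemma~\ref{lem:simplify}, but to select representatives at the opposite extreme. The starting point is the symmetry of the $r$-guarding relation: since the minimum enclosing rectangle satisfies $R(g,p)=R(p,g)$, a guard $g'$ guards a superset of the points guarded by $g$ if and only if, viewed as a target, $g'$ is \emph{at least as easy to guard} as $g$ (i.e.\ every point that guards $g$ also guards $g'$). Thus the task of choosing ``more powerful'' guard representatives is exactly the task of choosing target representatives that are least restrictive---the reverse of the selection made for $U'$ in Lemma~\ref{lem:simplify}, where interior points (the most restrictive targets) were preferred. I would reuse the monotonicity machinery behind Lemma~\ref{lem:simplify} (Appendix~\ref{apx:preliminA}), but push representatives toward pixel-sides and pixel-corners instead of pixel-interiors.

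First I would record the structural consequence of thinness: any axis-aligned rectangle $R\subseteq P$ of positive area cannot contain a pixel-corner in its interior, since such a corner lies on $\partial P$; hence $R$ spans at most one pixel in width or at most one pixel in height. Consequently every target that a guard $g\in\psi$ can $r$-guard by a non-degenerate rectangle lies in the horizontal corridor or the vertical corridor through $\psi$, with degenerate rectangles handled separately according to the paper's convention. Within such a corridor the reach of a guard is monotone in its position: sliding a guard toward a side or corner of $\psi$ transverse to the corridor only enlarges (or preserves) the set of targets it reaches, which is precisely the domination statement of Appendix~\ref{apx:preliminA} read in the ``easier target'' direction.

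Using this, I would build $\Gamma'$ pixel by pixel: for each pixel $\psi$ and each of its (at most four) corners $c$, if $\Gamma\cap\psi$ is nonempty I keep one guard of $\Gamma\cap\psi$ that is extremal toward $c$ (Pareto-maximal in both coordinates in the direction of $c$). This yields at most four points per pixel, all drawn from $\Gamma$, establishing the cardinality bound. For the second property, given $S\subseteq\Gamma$ that $r$-guards $U$, I map each $g\in S$ lying in $\psi$ to the kept representative for the corner toward which $g$'s reachable corridors point; by the monotonicity above this representative guards everything $g$ guards, so the image set $S'$ still guards $U$ while $|S'|\le|S|$. The main obstacle is the book-keeping of this last step: because a guard's reach in one direction depends on its coordinate in the perpendicular direction (a high guard may reach upward but not sideways into a lower neighbour, and vice versa), I must verify that a single extremal guard per corner genuinely dominates its whole class, and that four corners suffice to cover all guards of the pixel---together with a careful treatment of degenerate rectangles so that the ``at least as easy to guard'' comparison is exact at pixel-sides and pixel-corners.
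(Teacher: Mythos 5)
There is a genuine gap, and it lies exactly where you flagged it: a single Pareto-extremal guard per corner does \emph{not} dominate its class, because the guarding-power order inside a pixel is not coordinate-monotone. The correct structure, which the paper establishes in Lemmas~\ref{lem:interior2boundary} and~\ref{lem:boundary2corner}, is categorical rather than geometric: \emph{every} point of a pixel $\psi$ (interior or boundary, extremal or not) dominates every interior point of $\psi$, every point of a pixel-side $e$ dominates every interior point of $e$, and pixel-corners are maximal and mutually incomparable. In particular all interior points of a pixel are guarding-equivalent, while a boundary point can be strictly stronger than all of them --- e.g.\ in Figure~\ref{fig:pixelsRectangles} the point $g$ on a pixel-side guards $u_4$ via a degenerate rectangle, which no interior point of that pixel can do; similarly a point on a side shared with a neighbouring pixel reaches corridors of both adjacent pixels. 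Your Pareto order fails to see this: in the pixel $[0,1]^2$, the side point $(0.5,0)$ and the interior point $(0.1,0.1)$ are coordinate-incomparable toward the corner $(0,0)$ (and toward $(1,0)$), so your rule is free to keep only interior representatives and discard $(0.5,0)$ --- yet $(0.5,0)$ strictly dominates every interior point, and the resulting $S'$ may fail to guard $U$. A second, more technical defect is existence: for infinite $\Gamma$ (say an open subset of a pixel) a Pareto-maximal element toward a corner need not exist, so $\Gamma'$ is not even well-defined. The paper avoids both problems by a purely combinatorial selection --- keep every pixel-corner of $\Gamma$; for each side neither of whose endpoints is in $\Gamma$, keep one \emph{arbitrary} interior-of-side point of $\Gamma$; for each pixel with no boundary point in $\Gamma$, keep one arbitrary interior point --- and then the two domination lemmas immediately give $|S'|\le|S|$ with no extremality needed.

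A secondary issue is your appeal to thinness. The corridor observation (a positive-area rectangle in a thin polygon cannot contain a pixel-corner in its interior, hence spans one pixel in width or height) is correct for thin polygons, but Lemma~\ref{lem:simplifyGuards} is stated for an arbitrary orthogonal polygon and is reused in Section~\ref{sec:generalization} for $K$-thin polygons, where your corridor argument is unavailable. The paper's appendix proofs use only generic pixelation facts (the union of pixels whose interiors meet a rectangle $R\subseteq P$ is again a rectangle in $P$), so they hold without any thinness hypothesis; a proof routed through $1$-thin corridors would not support the paper's later generalizations even if the Pareto selection were repaired.
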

\begin{proof}
(Sketch) Similarly as the previous case, one argues that
we need at most one point-guard per interior, side or corner of each pixel.
A full proof can be found in Appendix~\ref{apx:preliminA}.
\end{proof}

\subsection{Maximal Rectangles and an Auxiliary Graph}  
Assume we are given a polygon $P$, a region $U\subseteq P$ to be
guarded, and a set $\Gamma$ of guards allowed to be used.
In what follows, we treat any element $\gamma\in \Gamma$ as a set,
so either $\gamma=\psi$ is a pixel-guard or $\gamma=\{p\}$ is
a point-guard.

As a first step, 
apply Lemmas~\ref{lem:simplify} and \ref{lem:simplifyGuards} to reduce
$U$ and the point-guards in $\Gamma$ so that they are finite sets,
each pixel contains at most 4 points of $U$,
and at most 4 point-guards of $\Gamma$.  

Let $\mathcal{R}$ be the set of maximal axis-aligned rectangles in $P$, i.e.,
$\rho \in \mathcal{R}$ if and only if $\rho\subseteq P$ and there is no
axis-aligned rectangle $\rho'$ with $\rho \subset \rho' \subseteq P$.  
In this definition of $\mathcal{R}$, we use the one that was meant for
$r$-guarding, i.e., we include degenerate rectangles in $\mathcal{R}$
if and only a degenerate rectangles $R(g,p)$ is sufficient for $g$
to $r$-guards $p$.

Now define graph $H$ as follows.  The
vertices of $H$ are $U\cup \mathcal{R} \cup \Gamma$, i.e.,
we have a vertex for every point that needs guarding, every maximal
rectangle in $P$, and every potential guard.  
We define edges of $H$ via containment as follows (see also Figure~\ref{fig:graphH}):
%\todo{TB: did you really mean to use $\psi_{11}$ in this graph and not $\psi_{10}$?}
\begin{itemize}
\item There is an edge from a point $u\in U$ to a rectangle $\rho \in \mathcal{R}$ if and only
	if $u\in \rho$.
\item There is an edge from a potential guard $\gamma \in \Gamma$ to a rectangle $\rho\in \mathcal{R}$ if
	and only if their intersection is non-empty.
\end{itemize}
\begin{figure}[t]
\centering
\includegraphics[width=0.40\textwidth]{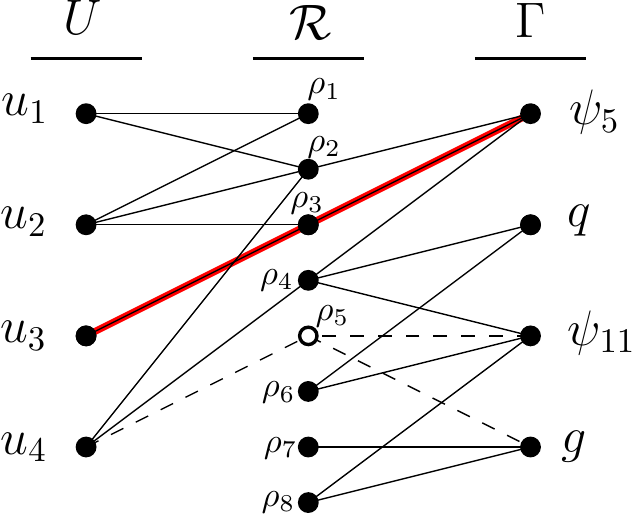}
\caption{The graph $H$ corresponding to Figure~\ref{fig:pixelsRectangles} for the chosen $U$ and $\Gamma$. The thick red path corresponds to
the pixel-guard $\psi_5$ seeing the point $u_3$ since both intersect rectangle $\rho_3$. Rectangle $\rho_5$ and its incident edges 
%$(u_4, \rho_5)$ and $(\rho_5, g)$ are 
are included in $H$ only if we allow degenerate rectangles.}
\label{fig:graphH}
\end{figure}

\begin{lemma}
A point $u\in U$ is $r$-guarded by $\gamma\in \Gamma$ if and only if there exists a path of length 2 from $u$ to $\gamma$ in $H$.
\end{lemma}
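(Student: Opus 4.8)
The plan is to reduce both directions of the equivalence to the single geometric fact that $p$ $r$-guards $u$ exactly when the minimum rectangle $R(u,p)$ is contained in some maximal rectangle of $\mathcal{R}$. First I would observe that $H$ has edges only between $\mathcal{R}$ and $U\cup\Gamma$ (there are no $U$--$\Gamma$, $U$--$U$, or $\Gamma$--$\Gamma$ edges), so any length-$2$ path from $u$ to $\gamma$ must have the form $u$--$\rho$--$\gamma$ for some $\rho\in\mathcal{R}$. By the definition of the two edge types, such a path exists if and only if there is a rectangle $\rho\in\mathcal{R}$ with $u\in\rho$ and $\gamma\cap\rho\neq\emptyset$. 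It therefore suffices to show that $\gamma$ $r$-guards $u$ if and only if such a $\rho$ exists.

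For the ``if'' direction I would take any $p\in\gamma\cap\rho$; treating $\gamma$ as a set handles point-guards and pixel-guards uniformly, and for a pixel-guard this $p$ is a genuine point of the (closed) pixel that is available as a guard. Since $u,p\in\rho$ and $\rho$ is an axis-aligned rectangle, the minimal bounding box satisfies $R(u,p)\subseteq\rho\subseteq P$. Because $\rho$ was admitted to $\mathcal{R}$ precisely under the degeneracy convention in force, the containment $R(u,p)\subseteq\rho$ certifies that $p$ $r$-guards $u$, and hence so does $\gamma$.

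For the ``only if'' direction, suppose $\gamma$ $r$-guards $u$, witnessed by a point $p\in\gamma$ whose rectangle $R(u,p)$ is a legal guarding rectangle inside $P$. I would then argue that $R(u,p)$ is contained in some $\rho\in\mathcal{R}$: if $R(u,p)$ has positive area, or is a segment that can be grown to positive area, it extends to a maximal positive-area rectangle, which always belongs to $\mathcal{R}$; and if $R(u,p)$ cannot be grown to positive area, then it is legal only because degenerate rectangles are being allowed for guarding, and one checks that the maximal line segment containing $R(u,p)$ is itself degenerate and therefore belongs to $\mathcal{R}$. In every case $u\in\rho$ and $p\in\gamma\cap\rho$, yielding the path $u$--$\rho$--$\gamma$.

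The main obstacle is the degeneracy bookkeeping in the ``only if'' direction: I must verify that the inclusion or exclusion of degenerate (maximal-segment) rectangles in $\mathcal{R}$ is exactly synchronized with the chosen guarding convention, so that the equivalence stays tight on the boundary cases (such as $R(g,u_4)$ versus $R(u_1,u_4)$ in Figure~\ref{fig:pixelsRectangles}), and in particular that every legal guarding rectangle genuinely extends to an element of $\mathcal{R}$ rather than stalling at a non-maximal segment. The remaining ingredients---monotonicity of bounding boxes and reading off the two edge types---are routine.
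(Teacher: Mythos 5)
Your proof is correct and takes essentially the same approach as the paper's: both directions rest on the same containment argument, expanding $R(u,p)$ to a maximal element of $\mathcal{R}$ (splitting into the positive-area and degenerate cases) for the ``only if'' direction, and using $R(u,p)\subseteq\rho\subseteq P$ for the ``if'' direction. If anything, your bookkeeping on how the degeneracy convention for $\mathcal{R}$ must match the guarding convention is slightly more explicit than the paper's.
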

\begin{proof}
If $u$ is $r$-guarded by $\gamma$, then there exists some $g\in \gamma$
such that the axis-aligned rectangle $R$ spanned by $p$ and $g$ is inside
$P$.  Expand $R$ until it is maximal to obtain $\rho\in\mathcal{R}$.
More precisely, if $R$ is non-degenerate, then use as $\rho$ some
maximal rectangle that has non-zero area and contains $R$.  If $R$ is
degenerate, then obviously degenerate rectangles were allowed for
$r$-guarding, and so expanding $R$ into a maximal line segment within $P$
gives an element $\rho$ of $\mathcal{R}$.
Either way $u\in R\subseteq \rho$ and $g\in R\subseteq \rho$ and we have a
path $u-\rho-g$ in $H$.

Vice versa, if there exists such a path, then it must have the form 
$u-\rho-\gamma$ for some maximal rectangle $\rho$ 
by construction of $H$.  By definition of the edges $u\in \rho$ and
some point $g\in \gamma$ satisfies $g\in \rho$, which means that the
axis-aligned rectangle spanned by $u$ and $g$ is inside $\rho\subseteq P$
and so $g$  (and with it $\gamma$) guards $u$.
\end{proof}

So the rGuarding problem reduces to finding the minimum
subset $S\subseteq \Gamma$ such that all $u\in U$ have a path of length~2
to some $\gamma\in S$, or reporting that no such $S$ exists.  
%For lack of a better name, we 
We call this the {\em restricted distance-$2$-dominating set} problem since it is the 
distance-$2$-dominating set while restricting who can
be chosen and who must be dominated.  We hence have:

\begin{lemma}
\label{lem:dist2dominating}
The $(U,\Gamma,P)$-rGuarding problem has a solution of size $k$
if and only if the restricted distance-$2$-dominating set in $H$ has
a solution of size $k$.
\end{lemma}

\subsection{Constructing a Tree Decomposition}
\label{sec:treedecompConstruction}

Assume now that the dual graph $D$ has small treewidth, defined as
follows. A
{\em tree decomposition} of a graph $D$ consists of a tree $I$ and
an assignment ${\cal X}:I\rightarrow 2^{V(D)}$ of {\em bags} to the nodes
of $I$ such that (a) for any vertex $v$ of $D$, the bags containing $v$
form a connected subtree of $I$ and (b) for any edge $(v,w)$ of $D$,
some bag contains both $v$ and $w$.  The width of such a decomposition
is $\max_{X\in {\cal X}} |X|-1$, and the {\em treewidth} $tw(D)$ of $D$ is the 
minimum width over all tree decompositions of $D$.

Fix a tree decomposition ${\cal T}=(I,{\cal X})$ of $D$ that has width $tw(D)$.
We now construct a tree decomposition of $H$ from $\mathcal{T}$
while increasing the bag-size by a constant factor.
%(A similar construction was given in \cite{otherESApaper} for
%a different guarding-problem.)
Any bag $X\in {\cal X}$ consists of vertices of $D$, i.e., pixels
of $P$.
To obtain ${\cal T'}=(I,{\cal X'})$, modify any bag $X\in 
	{\cal X}$ to get $X'$ as follows:  For any pixel $\psi\in X$, add to $X'$
\begin{itemize}
\item any point of $U$ that is in $\psi$, 
\item any guard of $\Gamma$ that intersects $\psi$, and
\item any rectangle in $\mathcal{R}$ that intersects $\psi$.  
\end{itemize}

Finally we may (optionally) delete all pixels from all bags, since these are
not vertices of $H$.  We call the final construction ${\cal T}^H=(I,{\cal X}^H)$.
See also Figure~\ref{fig:treeDecomposition}.

\begin{figure}[t]
\centering
\includegraphics[page=2,width=0.34\textwidth,trim=120 0 70 0,clip]{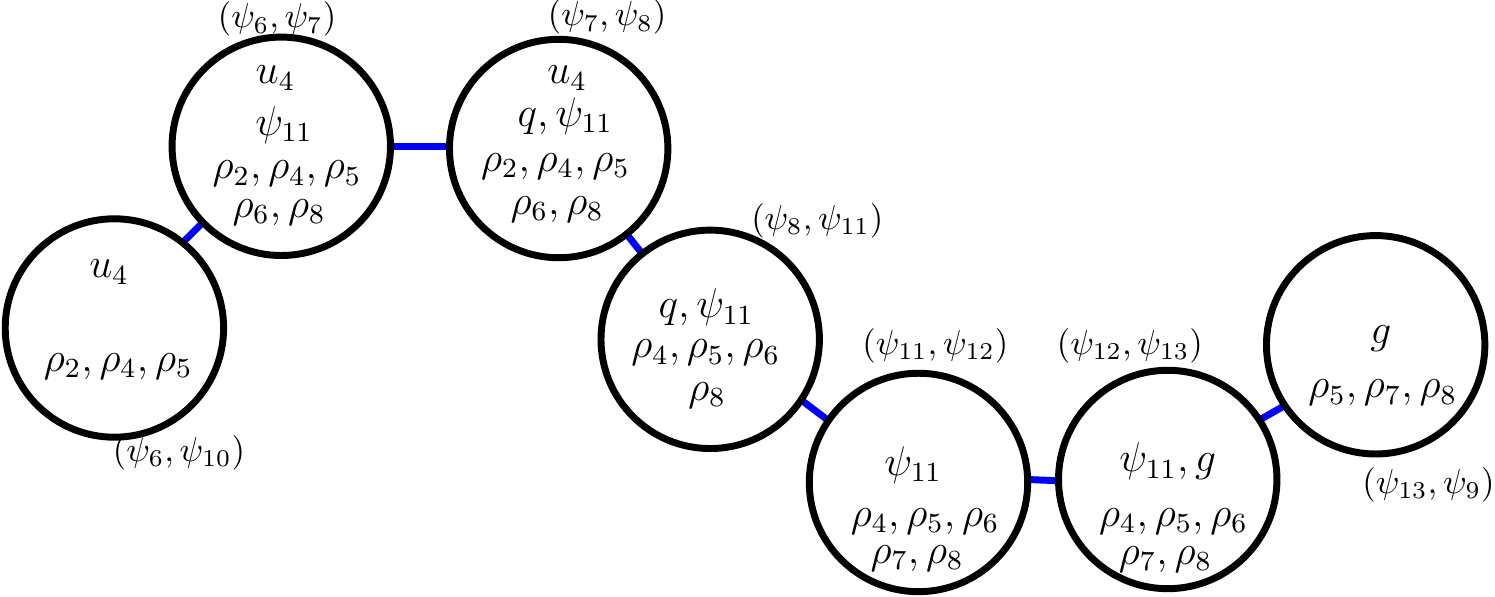}
\hspace*{\fill}
\includegraphics[page=1,width=0.64\textwidth]{treeDecomposition}
\caption{The tree decomposition ${\cal T}^H=(I,{\cal X}^H)$ of graph $H$ corresponding to a sub-polygon of the one in Figure~\ref{fig:pixelsRectangles}.
We label the bags with the edges of the tree they correspond to.}
\label{fig:treeDecomposition}
\end{figure}

\begin{lemma}
For any polygon, ${\cal T}^H=(I,{\cal X}^H)$ is a tree decomposition of $H$.
If $P$ is thin, then the tree decomposition has width $O(tw(D))$.
\end{lemma}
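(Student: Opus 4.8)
The plan is to verify the two tree-decomposition axioms for ${\cal T}^H$ and then bound its width. The construction attaches to each bag $X'$, for every pixel $\psi\in X$, all vertices of $H$ (points of $U$, guards of $\Gamma$, and maximal rectangles) that meet $\psi$. So I would first argue the \emph{covering} property: every vertex of $H$ lands in some bag. A point $u\in U$ lies in some pixel $\psi$, and since $\psi$ appears in at least one bag of ${\cal T}$, the corresponding $X'$ receives $u$; the same argument works for a guard $\gamma$ and for a rectangle $\rho$, using any pixel each of them intersects. For the \emph{edge} property, recall that every edge of $H$ joins some $x\in U\cup\Gamma$ to a rectangle $\rho\in\mathcal{R}$ where $x$ and $\rho$ intersect (or $x\in\rho$). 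The natural move is to pick a pixel $\psi$ contained in that intersection (the intersection of $x$ with $\rho$ is a nonempty axis-aligned region meeting the interior of at least one pixel); then whichever bag contains $\psi$ receives both $x$ and $\rho$, so they share a bag.

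The delicate axiom is \emph{connectivity}: for each vertex $z$ of $H$, the bags containing $z$ must form a connected subtree of $I$. For $z$ a point $u\in U$ located in the interior of a single pixel $\psi$, the bags containing $u$ are exactly those $X'$ with $\psi\in X$, and by property (a) of ${\cal T}$ these form a connected subtree, so we are done. The subtlety is that, after applying Lemmas~\ref{lem:simplify} and \ref{lem:simplifyGuards}, a simplified point or guard may sit on a pixel-side or pixel-corner, and thus belong to several pixels; likewise a maximal rectangle $\rho$ typically spans many pixels. In that case $z$ is inserted into a bag whenever \emph{any} of the pixels it meets is present, so the set of bags containing $z$ is the union of the subtrees corresponding to those pixels. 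I expect this to be the main obstacle, and I would resolve it by showing that the pixels meeting a given $z$ themselves induce a connected subgraph of the dual graph $D$: a maximal rectangle is a connected axis-aligned region, so the pixels it covers form a path or grid-block in $D$ and hence are connected; a pixel-side is shared by two adjacent pixels, and a pixel-corner's incident pixels are mutually adjacent in $D$. Once the pixel set is connected in $D$, the union of their bag-subtrees in ${\cal T}$ is connected (adjacent pixels share a bag by property (b), so the subtrees overlap), giving connectivity of the bags containing $z$ in ${\cal T}^H$.

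For the width bound, I would use thinness. Each original bag $X$ has $|X|\le tw(D)+1$ pixels, and the claim reduces to showing that each pixel contributes only $O(1)$ new vertices to $X'$. The simplification lemmas already guarantee at most $4$ points of $U$ and at most $4$ point-guards per pixel, and pixel-guards intersecting $\psi$ are $\psi$ itself together with its $O(1)$ neighbors in $D$. The only remaining quantity is the number of maximal rectangles meeting a fixed pixel $\psi$, and here thinness is essential: in a thin polygon every pixel-corner lies on the boundary, so a maximal rectangle through $\psi$ can be extended in each of the two axis directions only until it hits the boundary, and thinness forces each pixel to be crossed by only a constant number of such maximal rectangles (one can bound this by the constant number of boundary directions available at $\psi$). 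Multiplying the $O(1)$ per-pixel contribution by the $tw(D)+1$ pixels in each bag yields $|X'|\in O(tw(D))$, and deleting pixels only shrinks the bags, so ${\cal T}^H$ has width $O(tw(D))$, completing the proof.
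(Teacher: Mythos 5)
Your overall strategy matches the paper's: verify the covering, edge, and connectivity axioms by using the fact that the pixels met by any vertex of $H$ induce a connected subgraph of $D$ (so the union of their bag-subtrees in ${\cal T}$ is connected), then bound the width by counting the $O(1)$ items each pixel contributes to its bags. However, there is one genuine gap, and it sits exactly at the only point where thinness is used: your claim that in a thin polygon each pixel meets only a constant number of maximal rectangles is asserted, not proved. ``One can bound this by the constant number of boundary directions available at $\psi$'' is not an argument --- a priori a pixel could be crossed by many maximal rectangles of different widths and heights, each extending until it hits the boundary somewhere far away, and nothing in your sketch rules this out. The paper proves the bound (at most $6$ rectangles per pixel, via the more general Lemma~\ref{lem:Kthin}): the set of points that can $r$-guard a fixed pixel-corner $p$ is an $r$-star bounded by four orthogonal $xy$-monotone chains; thinness forces each chain to have $O(1)$ edges (otherwise one exhibits a forbidden $2\times 2$ grid in $D$); and every maximal rectangle containing $p$ is determined by the pair of vertical chain-edges it touches on its left and right sides. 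Without this (or an equivalent) counting argument, the width bound $O(tw(D))$ does not follow.

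Two smaller slips, both repairable in a line. In the edge axiom you ``pick a pixel $\psi$ contained in that intersection'': when $x$ is a point $u\in U$, the intersection $x\cap\rho$ is the single point $u$, which contains no pixel and need not meet any pixel's interior (e.g., $u$ at a pixel-corner). The correct move, which is what the paper does, is to pick a pixel $\psi$ containing $u$ (respectively containing a point $g\in\gamma\cap\rho$), so that $\rho\cap\psi\neq\emptyset$ and both endpoints of the edge are added to every bag containing $\psi$. Also, the pixels incident to a pixel-corner are not ``mutually adjacent in $D$'': two diagonally opposite pixels share only the corner, hence no side, and are non-adjacent in $D$. Connectivity still holds because the incident pixels form a contiguous cyclic arc around the corner in which consecutive pixels share a side --- that weaker property is all your argument actually needs.
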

\begin{proof}
First we argue that for any vertex of
$H$ the bags containing it are connected.  Crucial for this is that for
any pixel $\psi$, the bags that used to contain $\psi$ in ${\cal T}$ are
a connected subtree since ${\cal T}$ was a tree decomposition.
First consider a point $p$.  (We use $p$ for both
the point and for the vertex in $H$ representing it.)  Vertex $p$
was added to all bags that contained a pixel $\psi$ with $p\in \psi$.
There may be multiple such pixels (if $p$ is on the side or the
corner of a pixel), but the union of them
is a connected subgraph of $D$. 
For any connected subgraph, the bags containing vertices of it form
a connected subtree.  So the bags to which $p$ has been added form a connected subtree
of the tree $I$ of the tree decomposition as required.

The connectivity-argument is identical for a point-guard, and similar for pixel-guards and rectangles.  Namely,
consider a vertex of $H$ representing a pixel-guard $\gamma$.  This
guard was added to all the bags that contained a pixel $\psi$ that
intersects $\gamma$.  Again there may be many such pixels (up to 9),
but they are connected via $\psi$ and so the bags to which $\gamma$ is added
are connected.
Finally, consider a rectangle $\rho\in \mathcal{R}$ which was added to all
bags of pixels intersecting $\rho$.  The pixels that $\rho$
intersects form a connected subset of $P$ (because they are connected
along $\rho$), and hence correspond to a connected subgraph of $D$.
So the bags containing $\rho$ form a connected subtree.  

Now we must verify that for any edge of $H$, both endpoints appear in a bag.
Let $(u,\rho)$ be an edge from some point $u$ to some rectangle $\rho$.
Let $\psi$ be a pixel containing $u$.  Then $\rho\cap \psi \supseteq \{u\}$
is non-empty and so $\rho$ was added to any bag containing $\psi$.  We also
added $u$ to any bag containing $\psi$, so $u$ and $\rho$ appear in one bag.
Now consider some edge $(\gamma,\rho)$ from a
guard $\gamma$ to some rectangle $\rho$.
%, where $\gamma$ may be a point or a pixel. 
This edge exists because some point $g\in \gamma$ belongs to $\rho$.
Again fix some pixel $\psi$ that contains $g$ and observe that any
bag that contained $\psi$ has both $g$ and $\rho$ added to it.

It remains to discuss the width of the tree decomposition.  Consider
a bag $X$ of $\mathcal{T}$ and one pixel $\psi$ in $X$.
Since we reduced $U$ and $\Gamma$ with
Lemma~\ref{lem:simplify} and \ref{lem:simplifyGuards}, pixel $\psi$
intersects at most 4 points in $U$ and at most 4 point-guards.
It also intersects at most 9 pixel-guards.  Finally, one can show that in a thin
polygon $\psi$ intersects at most 6 maximal rectangles.
(A more general
statement will be proved in Lemma~\ref{lem:Kthin}.)  
Thus when creating bag $X'$ from bag $X$ we add
$O(1)$ new items per pixel and hence $|X'|\in O(|X|)$
and $\mathcal{T}^H$ has width $O(tw(D))$.
\end{proof}

\subsection{Solving 2-dominating Set}

To solve the restricted distance-$2$-dominating set problem on $H$, we first show that the problem can be expressed as a monadic second-order logic 
formula~\cite{Courcelle1990}. In particular, a set $S$ is a feasible solution for this problem if and only if
\[
S\subseteq \Gamma \quad\land\quad \forall u\in U \: \exists \rho\in \mathcal{R} \: \exists \gamma\in S: \: \text{adj}(u,\rho) \land \text{adj}(\rho, \gamma) 
\]
where adj is a logic formula to encode that its two parameters are adjacent in $H$. Since $H$ has bounded treewidth, we can find the smallest set $S$ that satisfies this or report that no such $S$ exists in linear time using Courcelle's theorem~\cite{Courcelle1990}. 
Here ``linear'' refers to the number of bags and hides a term that only depends on the treewidth.
One can show that a thin polygon has $O(n)$ pixels
(we will show something more general in Lemma~\ref{lem:kThinPixels}). 
Therefore graph $D$ has $O(n)$ vertices and hence a tree decomposition with $O(n)$
bags, and the run-time is hence $O(f(tw(D))n)$    for some computable function $f$.

\subsection{Run-time considerations}

We briefly discuss here how to do all other steps in linear time, under some
reasonable assumptions.  The first step is to find the pixels.  To do so,
we need to compute the vertical decomposition (i.e., the partition obtained
by extending only vertical rays from reflex vertices), which can be done in
$O(n)$ time \cite{Chazelle1991}.  Likewise, compute the horizontal decomposition.
Since (in a thin polygon) none of the rays intersect, we can obtain the pixels
(and with it, the pixelation-graph and $D$) in linear time.  Since
$D$ is planar, we can compute an $O(1)$-approximation of its treewidth in linear
time \cite{KammerTholey2012}, and hence can find $\mathcal{T}$ with width $O(tw(D))$.

Next we need to simplify $U$ and $\Gamma$.
The run-time to do so depends on the exact form of the original $U$ and $\Gamma$,
but as long as those have a simple enough form that we can answer queries such as
``does the interior of pixel $\psi$ intersect $U$'' in constant time, the overall
time is $O(1)$ per pixel and hence overall linear.

Next we need to find the rectangles $\mathcal{R}$.  In a thin polygon, all maximal
rectangles are either a ``slice'' defined by the vertical or horizontal decomposition,
or are a maximal line segment composed of pixel sides.  All such slices and
maximal line segments can be found from the pixelation in linear time, and there
are $O(n)$ of them.  This may yield some rectangles that are not maximal, but we
can retain those without harm since even then any pixel intersects $O(1)$ rectangles.

Constructing $H$ from these three sets, and building $\mathcal{T}^H$ given 
$\mathcal{T}$, can also clearly be done in linear time.
Putting everything together, we hence have:
\begin{theorem}
\label{thm:thin_treewidth}
Let $P$ be a thin polygon for which the dual graph has
treewidth $k$.  Then for any set $U\subseteq P$ and $\Gamma\subseteq P\cup \Psi$, we can solve the $(U,\Gamma,P)$-rGuarding problem in time
$O(f(k) n)$ time for some computable function $f$.
\end{theorem}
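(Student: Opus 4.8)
The plan is to assemble the machinery developed in the preceding subsections into a single pipeline and then verify that every stage runs in time linear in $n$, with a multiplicative factor depending only on $k$. First I would invoke Lemma~\ref{lem:simplify} and Lemma~\ref{lem:simplifyGuards} to replace $U$ and the point-guards of $\Gamma$ by finite sets in which each pixel contains at most four points of $U$ and at most four point-guards. This is the step that makes the auxiliary graph finite and keeps its local complexity bounded. Next I would construct the set $\mathcal{R}$ of maximal rectangles, respecting the chosen convention on degenerate rectangles, build the auxiliary graph $H$ on vertex set $U\cup\mathcal{R}\cup\Gamma$, and appeal to Lemma~\ref{lem:dist2dominating} to restate the $(U,\Gamma,P)$-rGuarding problem as a restricted distance-$2$-dominating set problem on $H$.

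The heart of the argument is the treewidth bound. I would take a tree decomposition $\mathcal{T}$ of the dual graph $D$ of width $O(k)$, expand each bag into a bag of $\mathcal{T}^H$ as described, and use the preceding lemma to conclude that $\mathcal{T}^H$ is a valid tree decomposition of $H$ of width $O(k)$. The crucial ingredient here is that in a thin polygon each pixel meets only $O(1)$ maximal rectangles (as well as $O(1)$ points of $U$, point-guards, and pixel-guards), so that each pixel contributes only a constant number of new bag-elements; this is exactly what keeps the width within a constant factor of $tw(D)$. With a width-$O(k)$ tree decomposition of $H$ in hand, I would express feasibility of the restricted distance-$2$-dominating set in monadic second-order logic and apply Courcelle's theorem to obtain a minimum solution, or detect infeasibility, in time $O(f(k))$ per bag.

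The remaining work is the run-time bookkeeping, which I expect to be the main obstacle in the sense that it is where the most care is required to ensure nothing slips above linear. Concretely, I would: compute the vertical and horizontal decompositions in $O(n)$ time and, using thinness so that the inward rays do not cross, read off the pixels, the pixelation graph, and $D$; compute an $O(1)$-approximate tree decomposition $\mathcal{T}$ of the planar graph $D$ in linear time; perform the simplification of $U$ and $\Gamma$ in $O(1)$ time per pixel under the stated assumption that intersection queries against $U$ and $\Gamma$ can be answered in constant time; and extract $\mathcal{R}$ as the collection of slices of the decompositions together with the maximal line segments of pixel sides, of which there are $O(n)$, in linear time. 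Building $H$ and deriving $\mathcal{T}^H$ from $\mathcal{T}$ is then immediate.

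Finally I would combine the pieces to get the stated bound. Since a thin polygon has $O(n)$ pixels, the dual graph $D$ and hence the tree $I$ have $O(n)$ nodes, so the Courcelle step runs in $O(f(k)\,n)$ time; as every other stage is linear in $n$, the total running time is $O(f(k)\,n)$ for a computable function $f$, which is exactly the claim.
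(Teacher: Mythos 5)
Your proposal is correct and follows essentially the same route as the paper: simplify $U$ and $\Gamma$ via Lemmas~\ref{lem:simplify} and~\ref{lem:simplifyGuards}, reduce to restricted distance-$2$-dominating set on $H$ via Lemma~\ref{lem:dist2dominating}, lift the width-$O(k)$ tree decomposition of $D$ to one of $H$ using the $O(1)$-rectangles-per-pixel bound, apply Courcelle's theorem, and verify each preprocessing stage (pixelation, planar treewidth approximation, extraction of $\mathcal{R}$ as slices and maximal segments) runs in linear time. Nothing essential is missing.
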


\section{Generalizations}
\label{sec:generalization}

In this section, we give some applications and generalizations of
Theorem~\ref{thm:thin_treewidth}.

\subsection{Thin Polygons with Few Holes} 

We claimed earlier that a simple thin polygon is a tree polygon,
and give here a formal proof because it will be useful later.

\begin{lemma}
\label{lem:thinSimpleTree}
Let $P$ be a thin polygon.  If $P$ has no holes, then the dual graph $D$
of the pixelation of $P$ is a tree.
\end{lemma}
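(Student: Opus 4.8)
The plan is to show that the dual graph $D$ is connected and has exactly one fewer edge than it has vertices; a connected graph with this many edges is a tree. Throughout I work with the planar \emph{pixelation graph} $G$ introduced earlier, whose vertices are the pixel-corners, whose edges are the pixel-sides, and whose bounded faces are exactly the pixels. Since $P$ is connected, $G$ is connected, so Euler's formula $V-E+F=2$ applies, where $V,E,F$ count the vertices, edges and faces (including the outer face) of $G$. Writing $F-1$ for the number of pixels, the goal becomes $|E(D)| = (F-1)-1 = F-2$.

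First I would establish connectivity of $D$. The interior $\mathrm{int}(P)$ is connected, and removing the finitely many pixel-corners keeps it connected. Given two pixels, pick a path in $\mathrm{int}(P)$ between interior points of them and perturb it to avoid all pixel-corners; it then meets pixel-sides only in their relative interiors, and each such crossing passes between two side-adjacent pixels. The sequence of pixels traversed is a walk in $D$, so $D$ is connected. This step uses neither thinness nor the no-holes assumption.

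Next I would count the edges of $D$ using thinness. Classify the edges of $G$ as boundary edges (bordering the outer face) or interior edges (shared by two pixels), so $E = E_{bd}+E_{int}$. Thinness says every vertex of $G$ lies on $\partial P$, hence $V=V_{bd}$, the number of boundary vertices. Because $P$ has no holes, $\partial P$ is a single simple closed curve, i.e.\ a cycle in $G$, so $V_{bd}=E_{bd}$ and therefore $V=E_{bd}$. Thinness also gives the needed bijection between interior edges and edges of $D$: if a shared side between two pixels contained another pixel-corner in its relative interior, that corner would lie strictly inside $P$, contradicting thinness; and two rectangles share at most one side. Hence $|E(D)|=E_{int}$. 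Combining with Euler, $|E(D)| = E - E_{bd} = (V+F-2)-E_{bd} = (E_{bd}+F-2)-E_{bd} = F-2$, as wanted, and together with connectivity this shows $D$ is a tree.

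The main obstacle is the bookkeeping that converts thinness into the identity $V=E_{bd}$: I must be sure that every vertex of $G$ really lands on the boundary and that, with no holes, the boundary is a single cycle so its vertex- and edge-counts coincide. The second delicate point is confirming that interior edges of $G$ correspond one-to-one with adjacencies in $D$, i.e.\ that thinness rules out a shared side being subdivided by a third pixel's corner (which would otherwise be an interior pixel-corner) and that no two pixels are adjacent along two distinct segments. Once these are pinned down, the Euler computation and the connectivity argument are routine.
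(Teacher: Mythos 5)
Your proof is correct, but it takes a genuinely different route from the paper's. The paper argues by contradiction, topologically: if $D$ had a cycle, then tracing through the midpoints of the corresponding shared pixel-sides produces a simple closed curve $C$ strictly inside $P$ that has pixel-corners both inside and outside it; since thinness puts every pixel-corner on $\partial P$, the boundary of $P$ would have points on both sides of $C$, which is impossible unless $P$ has a hole. You instead prove the two defining properties of a tree directly: connectivity of $D$ (via paths in the interior of $P$), and the edge count $|E(D)|=|V(D)|-1$ via Euler's formula on the pixelation graph, using thinness twice --- once to get $V=V_{bd}=E_{bd}$ (all corners lie on the boundary, which with no holes is a single cycle) and once to rule out T-junctions on shared sides, so that interior edges biject with edges of $D$. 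Both uses of the hypotheses are legitimate, and the delicate points you flag are the right ones; the only caveat I would add is that the edges of $G$ should be defined as the minimal corner-to-corner segments rather than literal pixel-sides, so that $G$ is a bona fide plane graph before Euler's formula is invoked --- your boundary count $V_{bd}=E_{bd}$ is insensitive to this choice, and your thinness argument shows interior sides are never subdivided, so $|E(D)|=E_{int}$ survives intact. As for what each approach buys: yours is elementary and actually proves connectedness of $D$, which the paper's proof leaves implicit (it only establishes acyclicity); the paper's proof, on the other hand, is shorter and its technique is reused later --- the observation that every cycle of $D$ encloses a distinct hole is precisely what gives the $h$-outerplanarity and treewidth bound of Corollary~\ref{cor:thin_treewidthFixedHoles}, a generalization that a pure counting argument does not yield.
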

\begin{proof}
Assume for contradiction that $D$ contains a cycle.  By tracing
along the midpoints of the pixels-sides corresponding to this cycle, 
we can create a simple closed curve 
$C$ that is inside $P$, yet has pixel-corners both inside and
outside $C$.  In a thin polygon, all pixel-corners are on the
boundary of $P$, so the boundary of $P$ has points both inside
and outside a simple closed curve that is strictly within $P$.  This
is possible only if $P$ has holes.
\end{proof}

Since every tree has treewidth 1, we hence have:

\begin{corollary}
Let $P$ be a thin polygon that has no holes.
Then for any sets $U\subseteq P$ and $\Gamma\subseteq P\cup \Psi$, we can solve 
the $(U,\Gamma,P)$-rGuarding problem in $O(n)$ time.
\end{corollary}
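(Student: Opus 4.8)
The plan is to combine the two preceding results, \autoref{lem:thinSimpleTree} and \autoref{thm:thin_treewidth}, via the elementary graph-theoretic observation that every tree has treewidth exactly $1$. This reduces the corollary to a one-step deduction with no remaining work of substance.

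Concretely, I would first invoke \autoref{lem:thinSimpleTree}: since $P$ is thin and has no holes, its dual graph $D$ is a tree. The next step is to recall that a tree decomposition of a tree is obtained by placing, for each edge $(v,w)$, a bag $\{v,w\}$ and stitching these bags together along the tree structure; the resulting decomposition has bags of size $2$, so its width is $1$, and hence $tw(D)=1$. (A single-vertex tree trivially has treewidth $0\le 1$, so the bound holds in every case.) This is the only fact one needs beyond the two cited results, and it is completely standard.

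Finally I would apply \autoref{thm:thin_treewidth} with $k=tw(D)=1$. That theorem guarantees that the $(U,\Gamma,P)$-rGuarding problem can be solved in time $O(f(k)\,n)$ for some computable function $f$, and since $k$ is now the fixed constant $1$, the factor $f(1)$ is itself a constant, leaving a running time of $O(n)$. The ranges of admissible $U\subseteq P$ and $\Gamma\subseteq P\cup\Psi$ in the corollary are exactly those permitted by \autoref{thm:thin_treewidth}, so no additional hypotheses are required.

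There is no genuine obstacle in this argument; the entire content lies in the two results already established earlier in the paper. The only point worth stating explicitly is that the treewidth is bounded by an absolute constant (namely $1$) independent of $n$, which is precisely what removes the $f(k)$ dependence from the asymptotics and yields a clean linear-time bound.
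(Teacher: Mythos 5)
Your proof matches the paper's own argument exactly: the paper derives this corollary from Lemma~\ref{lem:thinSimpleTree} together with the fact that every tree has treewidth $1$, then applies Theorem~\ref{thm:thin_treewidth} with $k=1$ so that $f(1)$ becomes a constant. No gaps; your extra remarks (explicit tree decomposition of a tree, the single-vertex case) are harmless elaborations of the same one-step deduction.
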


Inspecting the proof of Lemma~\ref{lem:thinSimpleTree}, we see that in fact every 
cycle of $D$ gives rise to a hole that is inside the curve defined by the
cycle.  If $D$ has $f$ inner faces, then each face defines a cycle in $D$,
and the insides of these cycles are disjoint.  Therefore, $D$ has at least $f$
holes.  Turning things around, if the polygon has $h$ holes, then $D$ has
at most $h$ inner faces.  In consequence, $D$ is a so-called 
$h$-outerplanar graph (i.e., if we remove all vertices from the outer-face and
repeat $h$ times, then all vertices have been removed).  It is well-known
that $h$-outerplanar graphs have treewidth $O(h)$ (see e.g.~\cite{cygan2015}).

\begin{corollary}
\label{cor:thin_treewidthFixedHoles}
Let $P$ be a thin polygon with $h$ holes.
Then for any sets $U\subseteq P$ and $\Gamma\subseteq P\cup \Psi$, we can solve the $(U,\Gamma,P)$-rGuarding problem in time
$O(f(h) n)$ time for some computable function $f$.
\end{corollary}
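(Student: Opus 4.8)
The plan is to derive Corollary~\ref{cor:thin_treewidthFixedHoles} directly from Theorem~\ref{thm:thin_treewidth}, whose statement already gives an $O(f(k)n)$ algorithm for any thin polygon whose dual graph $D$ has treewidth $k$. Since $P$ is thin, Theorem~\ref{thm:thin_treewidth} applies verbatim once we can bound $tw(D)$ by a function of the number of holes $h$. So the entire task reduces to establishing the structural bound $tw(D) \in O(h)$ and then substituting it into the theorem.

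First I would make precise the claim, foreshadowed in the discussion following Lemma~\ref{lem:thinSimpleTree}, that a thin polygon with $h$ holes has a dual graph $D$ with at most $h$ inner faces. The argument from that discussion is the tool: by tracing along midpoints of pixel-sides, every cycle of $D$ bounds a region containing pixel-corners, which (by thinness) must lie on the boundary of $P$; a cycle whose interior contains boundary points forces a hole inside that cycle. Applying this to the $f$ inner faces of $D$, whose interiors are pairwise disjoint, yields $f$ disjoint holes, hence $f \le h$. Second, I would invoke the known fact that a planar graph with at most $h$ inner faces is $h$-outerplanar (peeling off the outer face at most $h$ times removes everything, since each peel consumes at least one layer of the nested face-structure), and the standard result that $h$-outerplanar graphs have treewidth $O(h)$; the paper already cites~\cite{cygan2015} for the latter. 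Combining these gives $tw(D) \in O(h)$.

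Finally I would feed this bound into Theorem~\ref{thm:thin_treewidth}: with $k = tw(D) \in O(h)$, the run-time $O(f(k)n)$ becomes $O(f(O(h))n)$, which after absorbing the constant into the computable function is $O(g(h)n)$ for some computable $g$. Since $P$ is thin, the theorem's hypotheses are met, and the sets $U \subseteq P$ and $\Gamma \subseteq P \cup \Psi$ are exactly those allowed there, so no further adaptation is needed.

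The only genuinely nontrivial step is the structural bound tying the number of inner faces of $D$ to the number of holes, i.e.\ the passage from a cycle in $D$ to a hole of $P$. The subtlety is verifying that distinct inner faces of $D$ give rise to \emph{distinct} holes, so that the count $f \le h$ is valid; this requires that the disjoint-interiors property of the inner faces of a planar graph transfers to disjointness of the enclosed holes. Everything after that—the $h$-outerplanarity characterization, the treewidth bound, and the substitution into the main theorem—is a routine chain of known results, so I expect the write-up to be short.
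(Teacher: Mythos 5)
Your proposal is correct and is essentially the paper's own argument: the paper also derives the corollary from Theorem~\ref{thm:thin_treewidth} by inspecting the proof of Lemma~\ref{lem:thinSimpleTree} to show that each inner face of $D$ encloses a distinct hole (so $D$ has at most $h$ inner faces), concluding that $D$ is $h$-outerplanar and hence has treewidth $O(h)$ by the cited result~\cite{cygan2015}. The subtlety you flag (distinct faces yielding distinct holes via disjoint interiors) is exactly the point the paper relies on as well, so there is no gap to close.
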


\subsection{Polygons That Are Not Thin}

The construction of the tree decomposition of $H$ in Section~\ref{sec:treedecompConstruction}
works even if $P$ is not thin.
However, the bound on the resulting treewidth, and the claim on the
linear run-time both used that the polygon is thin.  We can generalize these results to
polygons that are somewhat thicker.  More precisely, we say that a polygon is {\em $K$-thin}
(for some integer $K\geq 1$) if the dual graph $D$ of $P$ contains
no induced $(K+1)\times (K+1)$-grid.  A thin polygon is a $1$-thin polygon in this terminology,
because a pixel-corner is in the interior if and only if the four pixels around it form a
4-cycle, hence a $2\times 2$-grid, in $D$.
Notice that $K$-thin is equivalent to saying that the pixelation-graph
has no induced $(K+2)\times (K+2)$-grid. 
%The {\em thinness} of a polygon  $P$ is the smallest $K$ so that $P$ is $K$-thin.
We need some observations:

\begin{lemma}
\label{lem:kThinDistance}
Let $P$ be a $K$-thin polygon.  Then, for any pixel-corner $p$, there exists
a point on the boundary of $P$ that is in the first quadrant relative to $p$
and has distance at most $2K+1$ from $p$,
where distance is measured by the length of the path in the pixelation-graph.
\end{lemma}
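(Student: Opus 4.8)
The plan is to argue by contradiction: assume that every boundary pixel-corner lying in the (closed) first quadrant relative to $p$ has pixelation-graph distance strictly greater than $2K+1$ from $p$, and derive a forbidden grid in $D$. If $p$ itself lies on $\partial P$ we are immediately done (distance $0$), so assume $p$ is an interior pixel-corner. Recall from the discussion of $K$-thinness that a pixel-corner is interior if and only if its four surrounding pixels exist and form a $2\times 2$-grid (a $4$-cycle) in $D$; equivalently, a pixel-corner lies on $\partial P$ exactly when its incident pixels fail to form such a clean $2\times 2$ pattern (for instance at a T-junction, where three pixels meet). Thus our assumption means that every first-quadrant pixel-corner within distance $2K+1$ is interior and is surrounded by a clean $2\times 2$ block of pixels.

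I would then build, by induction on $s$, an aligned solid block of $(s+1)\times(s+1)$ pixels anchored at $p$, whose pixel-corners occupy the combinatorial positions $(0,0)$ through $(s+1,s+1)$ of the first quadrant and whose corner at position $(i,j)$ has graph-distance at most $i+j$ from $p$. The base case $s=0$ is just the upper-right pixel of $p$, which exists because $p$ is interior. For the inductive step I use interior-ness of the corners at positions $(i,s+1)$ and $(s+1,j)$ (all at distance at most $2(s+1)\le 2K$ once $s+1\le K$): the clean $2\times 2$ pattern at each such corner forces the newly added pixels in row $s+1$ and column $s+1$ to have edges aligned with the existing block, since any mismatch in width or height would place a pixel-corner in the relative interior of an existing pixel-side, i.e.\ create a T-junction, and hence a boundary corner at distance $\le 2K+1$, contradicting the assumption. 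Processing corners in order of increasing Manhattan position keeps the distance bounds available at each step and avoids any circular reasoning about distances.

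Carrying the induction up to $s=K$ produces a solid, edge-aligned block of $(K+1)\times(K+1)$ pixels, all within the first quadrant of $p$. In such a block two pixels share a side exactly when they are orthogonally adjacent in the grid (diagonal pixels meet only at a corner), so the induced subgraph of $D$ on these $(K+1)^2$ pixels is precisely a $(K+1)\times(K+1)$-grid. This contradicts $K$-thinness. Hence some first-quadrant pixel-corner within distance $2K+1$ must fail to be interior, i.e.\ lie on $\partial P$, and it is the desired boundary point.

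I expect the main obstacle to be the alignment argument in the inductive step: the pixelation is not an a-priori uniform grid, so one must rule out ``brick-wall'' configurations in which pixels of adjacent rows are offset. The key is that every such offset forces a T-junction, which by the interior-characterization is a boundary corner and is therefore excluded by the contradiction hypothesis; making this local-to-global stitching precise, while verifying that every corner invoked genuinely lies within distance $2K+1$ and in the first quadrant, is the delicate part of the proof.
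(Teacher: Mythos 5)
Your proposal is correct and takes essentially the same approach as the paper: assume for contradiction that no boundary point lies in the first quadrant within distance $2K+1$, and construct the forbidden grid there --- a $(K+1)\times(K+1)$ induced grid in $D$, equivalently a $(K+2)\times(K+2)$ grid in the pixelation graph --- contradicting $K$-thinness. The paper's proof is a one-liner that takes all up/right paths from $p$ of at most $K+1$ edges each and asserts their union forms this grid; your inductive block-building with the T-junction/alignment argument simply makes explicit the local-to-global stitching that the paper leaves implicit.
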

\begin{proof}
Consider any path in the pixelation graph that starts at $p$ and goes upward
or rightward for at most $K+1$ edges each.
If some such path reaches a point on the boundary after at most $2K+1$ edges,
then we are done.  Else the union of these paths forms a
$(K+2)\times (K+2)$-grid in the pixelation-graph, and $P$ is not $K$-thin.
\end{proof}

\begin{lemma}
\label{lem:kThinPixels}
The pixelation of a $K$-thin polygon with $n$ vertices has $O(K^2 n)$ pixels.
\end{lemma}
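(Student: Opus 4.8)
The plan is to count the pixel-corners (the vertices of the pixelation-graph) rather than the pixels directly. Since the pixelation-graph is planar with maximum degree $4$, Euler's formula gives $E\le 2V$ and hence $\#\text{pixels}=O(\#\text{pixel-corners})$, so a bound of $O(K^2 n)$ on the number of corners is enough.

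First I would dispose of the corners lying on the boundary of $P$. Each such corner is either one of the $n$ vertices of $P$, or a point where an inward ray emitted at a reflex vertex meets $\partial P$. There are $O(n)$ reflex vertices, each emitting $O(1)$ rays, and each ray meets the boundary exactly once, so there are only $O(n)$ boundary corners. This part is routine.

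The real work is to bound the interior corners, which are exactly the crossings of a maximal horizontal segment with a maximal vertical segment (the extended rays together with the horizontal and vertical edges of $P$). Here I would charge each crossing $c$ to a boundary corner using Lemma~\ref{lem:kThinDistance}: that lemma produces, in the first quadrant relative to $c$, a boundary corner $b(c)$ reachable from $c$ by a monotone up-right path using at most $K+1$ upward and at most $K+1$ rightward pixelation-edges. Summing over all boundary corners $b$, the number of interior crossings is at most $\sum_b |S_b|$, where $S_b$ is the set of crossings that charge to $b$, i.e. the crossings lying in the down-left $(K{+}1)\times(K{+}1)$ monotone neighborhood of $b$.

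The crux, and the step I expect to be the main obstacle, is to show $|S_b|=O(K^2)$ for every boundary corner $b$. In a perfect grid this is immediate: a corner reachable in at most $K+1$ down-steps and $K+1$ left-steps is determined by the pair (number of down-steps, number of left-steps), giving at most $(K{+}2)^2$ corners. The difficulty is that the pixelation is not a perfect grid: T-junctions (segment endpoints) make the ``down'' and ``left'' moves fail to commute, so a priori $b$ could be reached by many distinct monotone paths ending at different corners. To control this I would leverage $K$-thinness to forbid the blow-up: if $S_b$ were too large, then among the horizontal and vertical segments carrying these monotone paths into $b$ one could extract $K+2$ horizontal and $K+2$ vertical segments that pairwise cross in a grid pattern, yielding an induced $(K{+}2)\times(K{+}2)$ grid in the pixelation-graph and contradicting $K$-thinness. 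Making this extraction precise --- pinning down exactly which segments form the forbidden grid in the presence of T-junctions --- is the delicate part. Once $|S_b|=O(K^2)$ is established, combining it with the $O(n)$ boundary corners yields $O(K^2 n)$ interior crossings, and therefore $O(K^2 n)$ pixel-corners and $O(K^2 n)$ pixels in total.
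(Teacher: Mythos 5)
Your proposal is not a different route: it is precisely the paper's proof. The paper likewise counts $O(n)$ boundary vertices (one per polygon vertex plus one per point where a ray hits $\partial P$), charges every pixel-corner to a nearby boundary vertex via Lemma~\ref{lem:kThinDistance}, and then concludes with the bare assertion that ``each vertex on the boundary has $O(K^2)$ pixel-corners within distance $2K+1$.'' Your preliminary reduction from pixels to pixel-corners via planarity and bounded degree is the paper's implicit ``hence $O(K^2 n)$ pixels,'' and is trivially fine. So up to the final counting step you match the paper exactly.

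The one step you flag as the crux --- $|S_b|=O(K^2)$ per boundary corner $b$ --- is exactly the step the paper dispatches in a single unjustified sentence, and your worry about it is legitimate: read literally, ``within distance $2K+1$'' cannot yield $O(K^2)$, since balls of radius $2K+1$ in a degree-$4$ graph can have size exponential in $K$; the count must be restricted to corners reachable from $b$ by monotone down-left paths with at most $K+1$ steps in each direction, and even then, as you observe, T-junctions prevent the down and left moves from commuting, so the perfect-grid count of $(K+2)^2$ is not automatic. Two remarks toward closing this: every maximal segment of the pixelation (a boundary edge, or a ray emitted at a reflex vertex and extended until it hits the boundary) has both endpoints on $\partial P$, so T-junctions occur only at boundary points and every interior pixel-corner is a $4$-valent crossing; consequently each non-commuting branching you fear is witnessed by a piece of $\partial P$ intruding between the two diverging paths, which is what your grid-extraction (or any charging of branchings to boundary features) would have to exploit. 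Note also that your planned extraction is in the same informal spirit as the paper's own proofs of Lemma~\ref{lem:kThinDistance} and Lemma~\ref{lem:Kthin}, which likewise assert that unions of monotone paths or supporting lines ``form a $(K+2)\times(K+2)$-grid.'' In short: no step in your plan is wrong, and it is the paper's approach stated with more candor; but the $O(K^2)$-charges-per-boundary-corner bound remains unproven in your write-up --- and, in fairness, in the paper as well.
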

\begin{proof}
There are $O(n)$ boundary vertices: one for each vertex of $P$, and one
whenever a ray hit the boundary (of which there are at most $n-4$ since
there are $n/2-2$ reflex vertices and each emits two rays).
Each vertex on the boundary has $O(K^2)$ pixel-corners within distance
$2K+1$.  By the previous lemma all pixel-corners must be within such distance,
so there are $O(K^2n)$ pixel-corners, and hence $O(K^2n)$ pixels.
\end{proof}

\begin{lemma}
\label{lem:Kthin}
\label{lem:kThinRectangles}
Any pixel $\psi$ in a $K$-thin polygon $P$ is intersected by $O(K^2)$ maximal axis-aligned
rectangles inside $P$.
\end{lemma}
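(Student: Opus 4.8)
The plan is to exploit $K$-thinness to show that every maximal rectangle is short in at least one direction, and then to count the rectangles of each fixed extent. First I would record that, since every edge of $\partial P$ lies on a pixel-line, each side of a maximal rectangle $\rho\subseteq P$ lies on a vertical or horizontal pixel-line. The key structural step is the following: if the interior of $\rho$ is crossed by $a$ vertical and $b$ horizontal pixel-lines, then the pixels meeting $\rho$ tile it as an $(a{+}1)\times(b{+}1)$ grid, and these pixels form an \emph{induced} $(a{+}1)\times(b{+}1)$ grid in the dual graph $D$. Since $P$ is $K$-thin, $D$ contains no induced $(K{+}1)\times(K{+}1)$ grid, so $\min(a,b)\le K-1$; hence $\rho$ spans at most $K$ pixel-rows or at most $K$ pixel-columns. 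I would call these the \emph{short} and \emph{narrow} cases (a rectangle may be both).

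Now fix $\psi$ and count the maximal rectangles meeting it in the short case; the narrow case is symmetric. If $\rho$ meets $\psi$ and spans at most $K$ rows, then its top and bottom sides $y_1,y_2$ are horizontal pixel-lines lying within $K$ rows of $\psi$ in $\psi$'s column. By Lemma~\ref{lem:kThinDistance} the boundary, and hence all relevant lines, lies within pixelation-distance $O(K)$, so there are only $O(K)$ candidate lines above and below $\psi$, giving $O(K^2)$ candidate pairs $(y_1,y_2)$. For each fixed vertical extent $[y_1,y_2]$ I claim at most $O(1)$ maximal rectangles of exactly that extent meet $\psi$: the maximal horizontal runs of the height-$[y_1,y_2]$ strip are pairwise disjoint $x$-intervals, each delimited by a vertical edge of $P$; since the open $x$-interval of $\psi$ contains no vertical pixel-line, no delimiter lies strictly inside it, so each run either contains the entire $x$-range of $\psi$ or misses its interior. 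As two disjoint intervals cannot both contain that range, exactly one run meets the interior of $\psi$, yielding $O(1)$ rectangles per pair. Multiplying, the short case contributes $O(K^2)$, and together with the symmetric narrow case the total is $O(K^2)$.

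The step I expect to be the main obstacle is the structural claim that the pixels meeting $\rho$ form an \emph{induced} grid in $D$: I must argue that two such pixels are $D$-adjacent exactly when they are consecutive along a row or column, which requires ruling out a stray pixel-line that partially enters $\rho$ between two consecutive crossing lines; this is where care with the varying pixel sizes is needed. A secondary technical point is handling degenerate maximal rectangles (maximal line segments): these fit the same scheme with $y_1=y_2$ (or $x_1=x_2$), contributing $O(K)$ horizontal and $O(K)$ vertical segments through $\psi$, while the same run-uniqueness argument bounds the non-degenerate rectangles that touch $\psi$ only along its boundary by $O(1)$ per pair, so neither case affects the $O(K^2)$ bound.
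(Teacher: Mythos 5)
Your approach is genuinely different from the paper's, and its core idea is sound. The paper reduces to counting maximal rectangles through a single pixel-corner $p$, invokes the $r$-star structure of the region $\gs{p}$ of points that guard $p$, shows each of its four monotone chains has $O(K)$ edges (a longer chain would force a $(K+2)\times(K+2)$ grid in the pixelation), and then argues that a maximal rectangle through $p$ is determined by the pair of vertical chain-edges it touches on its left and right sides, giving $O(K)\cdot O(K)$ rectangles. You bypass the $r$-star entirely: you extract an induced grid in the dual graph $D$ from the pixel-lines crossing a maximal rectangle $\rho$, so that $K$-thinness (which is literally a condition on induced grids in $D$) forces $\rho$ to span at most $K$ pixel-rows or at most $K$ pixel-columns, and then you count by enumerating $O(K)$ candidate supporting lines for the top and bottom sides and $O(1)$ maximal rectangles per pair. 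This is arguably a more direct use of the definition of $K$-thinness; the price is that you need structural facts about pixelations that the paper's route never touches.

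Two of those facts are where your write-up currently has gaps. First, the induced-grid claim, which you flag but leave open, does hold, and the missing ingredient is short: every pixel-line is a boundary edge or a ray from a reflex vertex extended until it hits $\partial P$, so both of its endpoints lie on $\partial P$; since the interior of $\rho\subseteq P$ contains no boundary points, any pixel-line meeting the interior of $\rho$ must cross $\rho$ completely. Combined with the fact that adjacent pixels share a full side of both (the subdivision points on the two sides of any maximal interior segment coincide, because rays pass through such segments rather than stopping at them), this yields the $(a{+}1)\times(b{+}1)$ tiling, its inducedness in $D$, and the useful containment: if the interior of $\rho$ meets the interior of a pixel $\psi$, then $\psi\subseteq\rho$. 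Second, your run-uniqueness justification is false as stated: ``the open $x$-interval of $\psi$ contains no vertical pixel-line'' is not true, since a vertical pixel-line may have its $x$-coordinate strictly inside the $x$-range of $\psi$ yet lie in a distant part of the polygon (another arm of a C-shaped region overlapping $\psi$'s column, say); what is true is only that no pixel-line crosses the \emph{interior} of $\psi$, and a run delimiter far above or below $\psi$ cannot be excluded that way. The fix is the containment fact just stated: if the interior of $\rho$ meets the interior of $\psi$, then $\psi\subseteq\rho$, so that run contains the entire $x$-range of $\psi$ and any run disjoint from it misses $\psi$; the rectangles meeting $\psi$ only along its boundary are handled the same way through the $O(1)$ pixels adjacent to $\psi$. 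Relatedly, the appeal to Lemma~\ref{lem:kThinDistance} is misplaced: the $O(K)$ bound on candidate top/bottom lines follows from the at-most-$K$-rows property together with the stack of pixels directly above and below $\psi$, not from distance to the boundary. With these repairs your argument is complete and gives the same $O(K^2)$ bound.
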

\begin{proof}
It suffices to show instead that any pixel-corner $p$ is intersected by $O(K^2)$
such rectangles, because a rectangle intersects
$\psi$ if and only if it intersects one of the four corners of $\psi$.
For ease of description, assume that pixel-corner $p$ is at the origin.
%As before 
Let $\gs{p}$ to be the region of all points that can $r$-guard $p$.  It
is well-known (see e.g.~\cite{worman2007}) that $\gs{p}$ is an $r$-star,
i.e., it exists of four orthogonal $xy$-monotone chain, one in each of the
quadrants.  (The first and last edge of each chain may lie on a coordinate axis,
and there may be a degenerate ``spike'' along the coordinate axes if degenerate rectangles
are allowed for $r$-guarding.)  
Also, any edge $e$ of $\gs{p}$ lies along pixel-sides, which means that the supporting
line of $e$ is a path of pixelation-edges until the point where it hits the boundary of $P$.

The chain $C_1$ in the first quadrant is monotonically
decreasing in $y$.  We claim that $C_1$
cannot have too many edges.  Assume for contradiction that it had
$4K+4$ or more edges, not counting any edges that are on the $y$-axis
or the $x$-axis.  Enumerate the edges by increasing $x$-coordinate,
and consider the point $q$ common to edge $2K{+2}$ and $2K{+}3$.
By definition of $\gs{p}$ rectangle $R(p,q)$ is inside $P$.
Also, it intersects the supporting lines of vertical edges 
$2,4, \dots,2K{+}2$ (as well as the $y$-axis),
and the supporting lines of horizontal edges
$2K{+}3,2K{+5},\dots,4K+3$ (as well as the $x$-axis).
See Figure~\ref{fig:kThin} for an illustration for $K=2$.
This creates a $(K+2)\times (K+2)$-grid in the pixelation, contradicting that $P$
is $K$-thin.  Likewise we can show that any of the chains $C_2,C_3,C_4$
in the other three quadrants has $O(K)$ edges.

\begin{figure}[t]
\centering
\includegraphics[width=0.55\textwidth]{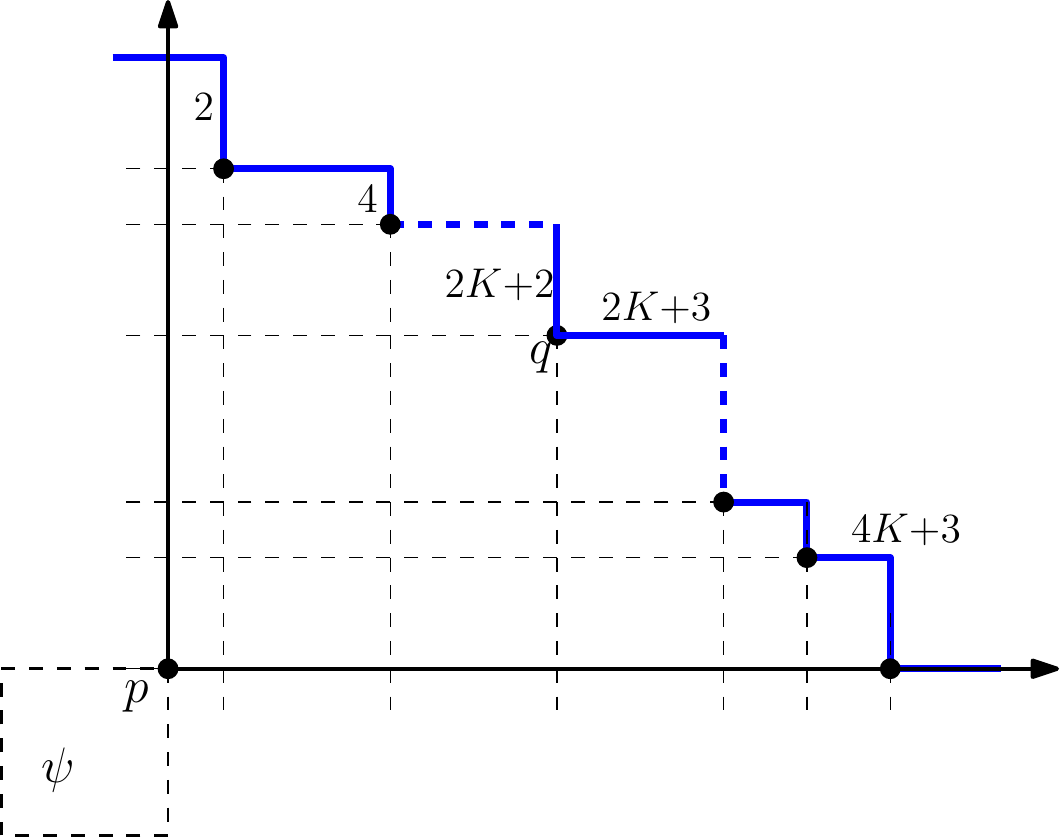}
\caption{If $C_1$ has $4K+4$ edges, then we an find a $(K+2)\times (K+2)$-grid
in the first quadrant.}
\label{fig:kThin}
\end{figure}

Let $\rho$ be a maximal axis-aligned rectangle that intersects $p$.
Clearly $\rho\subseteq \gs{p}$, since any point in $\rho$ $r$-guards $p$.
Also, $\rho$ must touch the boundary of $\gs{p}$ on all four sides, else
it would not be maximal.  Thus $\rho$ must touch at least one vertical
edge in $C_1$ or $C_4$ (on its right side) and at least one vertical
edge in $C_2$ or $C_3$ (on its left side).  Vice versa, if we fix the
vertical edges touched by a rectangle on the left and right side, then
there can be at most one maximal rectangle realizing this (obtained by
expanding upward and downward until we reach the boundary of $\gs{p}$.)  
Therefore, the number of maximal rectangles is at most 
$|C_1\cup C_4|\cdot |C_2\cup C_3|\in O(K^2)$.
\end{proof}

We now have the following:

\begin{theorem}
Let $P$ be a polygon for which the dual graph has treewidth $k$.
Then for any set $U\subseteq P$ and $\Gamma\subseteq P\cup \Psi$,
the $(U,\Gamma,P)$-rGuarding problem can be solved in
$O(f(k^3) k^2n)$ time for some computable function $f(.)$. 
\end{theorem}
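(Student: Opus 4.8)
The plan is to reduce the bounded-treewidth case to the $K$-thin machinery already developed, using the fact that a bound on the treewidth of the dual graph $D$ automatically forces a thinness bound. Concretely, the $(k+1)\times(k+1)$ grid graph has treewidth exactly $k+1$, and treewidth can only drop when passing to a subgraph; hence a graph of treewidth $k$ can contain no $(k+1)\times(k+1)$ grid as a subgraph, and in particular none as an \emph{induced} subgraph. By the definition of $K$-thinness this means $P$ is $k$-thin, so every structural lemma proved for $K$-thin polygons applies with $K=k$.

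First I would apply Lemmas~\ref{lem:simplify} and~\ref{lem:simplifyGuards} to replace $U$ and the point-guards of $\Gamma$ by finite sets with at most four points of each kind per pixel; these lemmas never used thinness, so they carry over unchanged. Next I would build the auxiliary graph $H$ on $U\cup\mathcal{R}\cup\Gamma$ and the transformed tree decomposition $\mathcal{T}^H$ exactly as in Section~\ref{sec:treedecomp}, so that by Lemma~\ref{lem:dist2dominating} the instance becomes a restricted distance-$2$-dominating set instance in $H$ that is solvable by Courcelle's theorem. The only quantitative change is in the width bound. Starting from a width-$O(k)$ tree decomposition $\mathcal{T}$ of $D$, each of its bags holds $O(k)$ pixels, and each pixel now contributes a constant number of points and guards together with $O(k^2)$ maximal rectangles by Lemma~\ref{lem:Kthin} (applied with $K=k$). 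Thus each augmented bag has size $O(k)\cdot O(k^2)=O(k^3)$, so $\mathcal{T}^H$ has width $O(k^3)$.

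To obtain the running time I would then count bags: by Lemma~\ref{lem:kThinPixels} (with $K=k$) the pixelation has $O(k^2 n)$ pixels, so $D$, and therefore $\mathcal{T}^H$, has $O(k^2 n)$ bags. Running the monadic second-order dynamic program on $H$ takes time linear in the number of bags with a factor depending only on the width, i.e.\ $O(f(k^3)\cdot k^2 n)$ for some computable $f$, which is the claimed bound.

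The hard part will be the preprocessing, since the linear-time pixel computation in Section~\ref{sec:treedecomp} explicitly relied on the horizontal and vertical rays never crossing in a genuinely thin polygon. For a merely $k$-thin polygon these rays may cross, so I would instead argue directly that the pixelation still has only $O(k^2 n)$ pixels and can be computed within the target bound, that the $O(k^2 n)$ maximal rectangles (each a slice of the vertical or horizontal decomposition, or a maximal axis-parallel segment) are found in the same time, and that Lemma~\ref{lem:Kthin} keeps the per-pixel rectangle count at $O(k^2)$ so that building $H$ and augmenting $\mathcal{T}$ into $\mathcal{T}^H$ stays within $O(k^2 n)$ up to the width-dependent factor already absorbed into $f(k^3)$.
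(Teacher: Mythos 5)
Your proposal is correct and follows essentially the same route as the paper's own proof: treewidth $k$ excludes the $(k+1)\times(k+1)$ grid so $P$ is $k$-thin, the machinery of Section~\ref{sec:treedecomp} together with Lemma~\ref{lem:Kthin} gives an augmented tree decomposition of width $O(k^3)$ over $O(k^2n)$ bags (via Lemma~\ref{lem:kThinPixels}), and Courcelle's theorem finishes the job. One minor caution on the part you flagged as preprocessing: your parenthetical claim that every maximal rectangle is a slice of a decomposition or a maximal segment holds only for thin polygons and fails for general $k$-thin ones, but since you explicitly defer that step to a direct argument (and the paper itself only sketches the preprocessing, via the vertical decomposition plus horizontal ray-shooting), this does not change the verdict.
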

\begin{proof}
Since the dual graph $D$ has treewidth $k$, it contains no
$(k+1)\times (k+1)$-grid, and so $P$ is $k$-thin.
The pixelation of $P$ has $O(k^2 n)$ vertices by Lemma~\ref{lem:kThinPixels}, 
and can be constructed in $O(k^2 n)$ time by constructing the
vertical decomposition and then ray-shooting along the horizontal rays
emitted from reflex vertices.  
Find a tree decomposition $\mathcal{T}$ of $D$ of width $O(k)$ with $O(k^2 n)$ bags;
this can be done in linear time since $D$ is planar \cite{KammerTholey2012}.
Replace each pixel in each bag of $\mathcal{T}$ by points, guards
and rectangles as explained in Section~\ref{sec:treedecompConstruction}.  
Since each pixel belongs to
$O(k^2)$ rectangles, the resulting tree decomposition has width
$O(k^3)$.  Now solve the restricted 2-dominating set problem
using Courcelle's theorem.  The run-time is as desired since
we have $O(k^2 n)$ bags and treewidth $O(k^3)$.
\end{proof}

\subsection{$K$-Thin Polygons with Few Holes} 

Both of the above generalizations can be combined, creating an algorithm
that is fixed-parameter tractable in both the thinness and the number of holes.

\begin{lemma}
Let $P$ be a polygon
that is $K$-thin and that has $h$ holes.  Then the dual graph of $P$
has treewidth $O(K(h+1))$.
\end{lemma}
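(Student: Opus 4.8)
The plan is to combine the \emph{local} thinness of $P$ (which keeps every pixel close to the boundary) with its \emph{global} topology (only $h+1$ boundary components). Throughout I set $r=2K+1$. By Lemma~\ref{lem:kThinDistance}, from every pixel-corner one reaches the boundary $\partial P$ within $r$ pixelation-edges, so every pixel-corner lies within graph-distance $r$ of $\partial P$. Since $\partial P$ is the disjoint union of the outer boundary and the $h$ hole-boundaries, I think of $P$ topologically as a disk with $h$ holes, i.e.\ a planar region with $h+1$ boundary components. I will work in the pixelation graph $G$; since $D$ is the weak planar dual of $G$, we have $tw(D)=O(tw(G))$ \cite{cygan2015}, so a bound of $O(K(h+1))$ for $tw(G)$ suffices.

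The heart of the argument is to cut $P$ into a simply connected region along \emph{short} paths. The key observation is that any two boundary components are joined by a short corridor: colour each pixel-corner by a choice of its nearest boundary component (within distance $r$). If an edge $uv$ of $G$ joins corners of distinct colours $A\neq B$, then concatenating a shortest path from $A$ to $u$, the edge $uv$, and a shortest path from $v$ to $B$ yields an $A$--$B$ path of length at most $2r+1=O(K)$. Since $P$ is connected and at least two components remain, such a bichromatic edge always exists, so I would cut iteratively: in the current (possibly already cut) region, distances to the boundary have only decreased, so the corridor bound still holds and there is a path of length $O(K)$ joining two distinct boundary components; cutting along it (splitting each corner on the cut into one copy per side) reduces the number of components by one. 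After $h$ such cuts we reach a simply connected polygon $P'$, the cut-corners number $O(Kh)$ in total, and the pixelation graph $G'$ of $P'$ is $G$ with the cut-edges deleted and those $O(Kh)$ corners duplicated.

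I would then bound $tw(G')$ and pay for the re-gluing. The single boundary $\partial P'$ contains the image of $\partial P$, so every corner of $G'$ is still within distance $r=O(K)$ of $\partial P'$; peeling outer layers therefore removes everything after $O(K)$ rounds, so $G'$ is $O(K)$-outerplanar and admits a tree decomposition $\mathcal T'$ of width $O(K)$ \cite{cygan2015}. Let $Z$ be the set (of size $O(Kh)$) of corners of $G$ incident to a cut-edge. I build $\mathcal T$ from $\mathcal T'$ by identifying each split pair back to its original corner and then inserting all of $Z$ into every bag. Now every corner of $Z$ lies in all bags (a connected subtree), every other corner keeps the connected subtree it had in $\mathcal T'$, and every edge of $G$ is covered: edges of $G'$ as before, while each cut-edge has both endpoints in $Z$ and is covered by any bag. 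The width is $O(K)+|Z|=O(K)+O(Kh)=O(K(h+1))$, as required.

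The step I expect to be the main obstacle is establishing the short cuts, i.e.\ that the $h+1$ boundary components can be stitched together using paths of length $O(K)$ each. This is exactly where both hypotheses enter: $K$-thinness, through Lemma~\ref{lem:kThinDistance}, forces the corridor separating any two boundary components to have width $O(K)$, so a cut across it is short, while the connectivity and planarity of $P$ guarantee that $h$ cuts reach every component and reduce $P$ to a disk. Once the cuts are short, the outerplanar bound for the simply connected $P'$ together with the ``insert the $O(Kh)$ cut-corners into every bag'' trick give the bound routinely; the only bookkeeping that needs care is the duplication of corners along the cuts when re-gluing, which the global insertion of $Z$ handles cleanly.
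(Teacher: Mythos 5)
Your proof is correct, but it takes a genuinely different route from the paper's. The paper works entirely inside the dual graph: it forms the full dual $D'$ (with a vertex per hole and one for the outer face) and shows every vertex of $D'$ has distance $O(K(h+1))$ to the outer-face vertex, hence $D'$ is $O(K(h+1))$-outerplanar. The key trick there is to link each hole to a \emph{different} boundary component by an $O(K)$-length path: starting from the corner of the hole maximizing the coordinate sum, the up-and-right path of Lemma~\ref{lem:kThinDistance} must land on another hole or the outer face, and the monotone choice guarantees the chain of such links never cycles, so it reaches the outer face after at most $h$ hops. You instead cut the polygon into a disk along $h$ slits of length $O(K)$ each (found via your nearest-boundary-component colouring and a bichromatic edge), bound the cut pixelation graph by $O(K)$-outerplanarity, and then pay $+O(Kh)$ by inserting all slit vertices into every bag. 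Your route replaces the paper's max-coordinate acyclicity trick with the simpler corridor argument plus the standard ``delete few vertices, then add them to all bags'' device, and it cleanly separates the local cost ($O(K)$ from thinness) from the global one ($O(Kh)$ from topology); it even yields the slightly stronger structural fact that deleting $O(Kh)$ vertices drops the treewidth to $O(K)$. The price is two transfer steps the paper never needs: (i) the passage from the pixelation graph $G$ to the weak dual $D$, for which the citation to \cite{cygan2015} is loose --- but this is easily fixed, e.g.\ by replacing each pixel-corner in a bag by its at most four incident pixels, giving $tw(D)\leq 4\,(tw(G)+1)$; and (ii) the bookkeeping of slit-cutting, where you correctly observe that cutting only adds boundary (so the distance-$2K{+}1$ bound survives in the cut region) and should also note that a cut joining two \emph{distinct} boundary components never disconnects the region, so the bichromatic-edge argument can indeed be iterated $h$ times. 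Both arguments give the same $O(K(h+1))$ bound.
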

\begin{proof}
Let $D'$ be the (full) dual graph of the pixelation graph, i.e., 
it is graph $D$ plus a vertex for each hold and for the outerface,
connected to all incident pixels.
We claim that all vertices in $D'$ have distance
$O(K(h+1))$ from the outerface-vertex.
This implies that $D'$ (and hence also $D$) is $O(K(h+1))$-outerplanar 
and so has treewidth $O(K(h+1))$.

To prove the distances, we first connect holes as
follows.  If $H$ is a hole, then let $c$ be a corner of $H$ that
maximizes the sum of the coordinates (breaking ties arbitrarily).
Let $\psi$ be a pixel incident to $c$ and let $c'$ be some other
corner of $\psi$.
By Lemma~\ref{lem:kThinDistance}, there exists a pixel-corner
$p$ on the boundary of $P$ within distance $2K+1$ from $c'$. 
Moreover, the path from $c'$ to $p$ goes only up and right.
Thus $p$ is incident to the outer-face or to a hole $H'$,
where $H'\neq H$ by choice of $c$.  Following this path, we
can hence find a path in $D$ of length $O(K)$ from the vertex
representing $H$ to the vertex representing $H'$ or the outer-face.
Combining all these paths, we can reach the outer-face from 
any hole in a path of length $O(K(h+1))$.

Now for any other vertex in $D$ (hence pixel $\psi$), let 
$c$ be one pixel-corner, and find a path in the pixelation of length 
at most $2K+1$ from $c$ to some point on the boundary.  Following
this path, we can find a path of length $O(K)$ in $D$ from $\psi$
to some hole or the outer-face, and hence reach the outer-face
along a path of length $O(K(h+1))$.  The result follows.
\end{proof}

The following summarizes this approach, and includes all previous results.

\begin{theorem}
Let $P$ be a polygon that is $K$-thin and has $h$ holes.
Then for any set $U\subseteq P$ and $\Gamma\subseteq P\cup \Psi$,
the $(U,\Gamma,P)$-rGuarding problem can be solved in
$O(f((K(h+1))^3) (K(h+1))^2n)$ time for some computable function $f(.)$. 
In particular, the rGuarding problem is fixed-parameter
tractable in $K+h$.
\end{theorem}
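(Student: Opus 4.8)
The plan is to obtain the result purely by substitution: feed the treewidth bound from the preceding lemma into the general treewidth-based algorithm established in the earlier (treewidth-$k$) theorem. First I would invoke the preceding lemma to conclude that the dual graph $D$ of $P$ has treewidth $k := O(K(h+1))$. Since a graph of treewidth $k$ contains no induced $(k+1)\times(k+1)$-grid and $K\le k$, the polygon $P$ is in particular $k$-thin, so Lemmas~\ref{lem:kThinPixels} and~\ref{lem:kThinRectangles} apply with parameter $k$ (indeed already with the sharper parameter $K$, but the coarser bound suffices).

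Next I would run the algorithm behind the earlier treewidth-$k$ theorem essentially unchanged. Concretely: construct the pixelation, which by Lemma~\ref{lem:kThinPixels} has $O(k^2 n)$ vertices and can be built in that time via a vertical decomposition followed by horizontal ray-shooting from reflex vertices; compute a width-$O(k)$ tree decomposition $\mathcal{T}$ of the planar graph $D$ with $O(k^2 n)$ bags in linear time; then expand each bag by replacing every pixel $\psi$ in it with the $O(1)$ points of $U$, the $O(1)$ point-guards, the $O(1)$ pixel-guards, and, by Lemma~\ref{lem:kThinRectangles}, the $O(k^2)$ maximal rectangles intersecting $\psi$. The resulting tree decomposition of $H$ therefore has width $O(k^3)$. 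Finally, solve the restricted distance-$2$-dominating set problem on $H$ via Courcelle's theorem; with $O(k^2 n)$ bags and width $O(k^3)$ this runs in $O(f(k^3)\,k^2 n)$ time for some computable $f$. Substituting $k=O(K(h+1))$ gives the claimed bound $O(f((K(h+1))^3)\,(K(h+1))^2 n)$.

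For the fixed-parameter tractability statement I would simply observe that, treating $K$ and $h$ as parameters, $k=O(K(h+1))$ depends only on $K+h$; hence $f(k^3)$ and $k^2$ are bounded by a computable function of $K+h$ alone, and the running time has the form $O(g(K+h)\cdot n)$. Thus rGuarding is fixed-parameter tractable in $K+h$.

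Since every ingredient is already in place, I do not expect a genuine obstacle beyond bookkeeping. The one point requiring a little care is that the earlier theorem is phrased for a polygon whose dual has treewidth exactly $k$, whereas here $P$ is given as $K$-thin with $h$ holes. I would reconcile this by noting that $K$-thinness implies $k$-thinness for the larger value $k=O(K(h+1))$, so every step of the earlier argument applies verbatim, and that using $k$ rather than the sharper $K$ in the pixel count and in the per-pixel rectangle count only loosens the estimate to precisely the form stated in the theorem.
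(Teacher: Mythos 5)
Your proposal is correct and takes essentially the same route as the paper: the paper obtains this theorem by plugging the preceding lemma (the dual graph of a $K$-thin polygon with $h$ holes has treewidth $O(K(h+1))$) into the earlier treewidth-$k$ theorem, whose proof (pixelation size via Lemma~\ref{lem:kThinPixels}, width-$O(k)$ tree decomposition of the planar dual, bag expansion with $O(k^2)$ rectangles per pixel, and Courcelle's theorem) you have reproduced faithfully, including the observation that the fixed-parameter tractability claim follows because $k=O(K(h+1))$ depends only on the parameters.
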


\section{Conclusion}
\label{sec:conclusion}
In this paper, we studied the problem of guarding a thin polygon
under the model that a guard can only see a point if the entire
axis-aligned rectangle spanned by them is inside the polygon.  We
showed that this problem is NP-hard, even in thin polygons, if there
are holes.  If there are few holes or, more generally, the dual
graph of the polygon has bounded treewidth, then we solved the problem in linear time.

Our approach is quite flexible in that we can specify which points
must be guarded and which points/pixels are allowed to be used as
guards.   In fact, with minor modifications even more flexibility
is possible.  We could allow any guard that consists of a connected
union of pixels (as long as any pixel is intersected by $O(1)$ guards).
We could even consider other guarding models by replacing the
rectangles in $\mathcal{R}$ by arbitrary connected unions of pixels and
pixel-sides (again as
long as any pixel is intersected by $O(1)$ such shapes).  For all these,
the (naturally defined) auxiliary graph $H$ has treewidth $O(tw(D))$
in thin polygons, and we can hence solve $r$-guarding by solving
the restricted distance-2-dominating set.

Our results mean that the complexity of $r$-guarding is
nearly resolved, with the exception of polygons that have $O(1)$
holes but are not $K$-thin for a constant number $K$.
For such polygons, is the problem still NP-hard?
Also, for polygons that have a large number of holes, is the problem
APX-hard, or can we develop a PTAS?

\section*{Acknowledgement}

The authors would like to thank Justin Iwerks and Joe Mitchell;
the discussions with them made us consider thin polygons in the
first place.

%-------------------------------------------------------------------
%BIBLIOGRAPHY

\bibliographystyle{plain}
\bibliography{ref}

\begin{appendix}
\newpage

\section{Reducing the number of points and guards}
\label{apx:preliminA}

In this section, we show that we can reduce the number of points and guards to a finite
set without affecting the solutions.  For this, it is important to analyze which points
are equivalent with respect to what they guard.
It was already mentioned by Worman and Keil (\cite{worman2007}, Lemma 1) that if $q$
$r$-guards a point $p$ in a pixel $\psi$, then $q$ guards all of $\psi$.  This, however, is
not correct if $p$ is on the boundary of $\psi$, because $q$ might only see that boundary.
We prove here a more precise version.    As before, let the {\em guarding set} $\gs{p}$
of a point $p$ to be all those points $g$ that $r$-guard $p$.  Since $r$-guarding is
a symmetric operation, this is the same as all those points that are being $r$-guarded
by $p$.

\begin{lemma}
\label{lem:interior2boundary}
Let $p$ be a point in the interior of a pixel $\psi$, and let $p'$ be any other point
in $\psi$ (possibly on the boundary of $\psi$).  Then $\gs{p}\subseteq \gs{p'}$.
\end{lemma}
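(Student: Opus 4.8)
The plan is to prove the containment directly: fix a guard $g\in\gs{p}$, so that the rectangle $R(g,p)$ lies in $P$, and show $R(g,p')\subseteq P$; this means $g\in\gs{p'}$, and since $g$ was arbitrary it gives $\gs{p}\subseteq\gs{p'}$. The entire configuration is symmetric under the four axis reflections, so I may assume $g$ lies weakly to the lower-left of $p$, i.e.\ $g_x\le p_x$ and $g_y\le p_y$. Write $\psi=[a,b]\times[c,d]$; because $p$ is interior we have $a<p_x<b$ and $c<p_y<d$, whereas $p'\in\psi$ only satisfies $a\le p'_x\le b$ and $c\le p'_y\le d$. If $g\in\psi$ as well, then $R(g,p')\subseteq\psi\subseteq P$ and there is nothing to prove, so the interesting situation is $g\notin\psi$.

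The engine of the proof is a \emph{propagation} statement about the pixelation: if $\{x_0\}\times[y_0,c]\subseteq P$ for some $x_0\in(a,b)$ and some $y_0<c$, then the whole slab $[a,b]\times[y_0,c]\subseteq P$. I would establish this by the standard pixelation argument. No vertex of $P$ can lie on the \emph{open} bottom side of a pixel, since such a vertex would emit a vertical ray extending upward into the pixel and subdividing it; hence, once $P$ is present just below the bottom side at one interior $x$-coordinate, the entire open bottom side is non-boundary, and likewise no vertical ray or boundary edge coming from below can reach it without intruding into $\psi$. Consequently a single pixel $\psi'$ abuts the whole bottom side, giving a full-width strip $[a,b]\times[c',c]\subseteq P$; iterating downward through successive pixels extends the strip to any depth that $R(g,p)$ already certifies. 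By reflection, the analogous propagation statements hold for the left, right, and top sides of $\psi$.

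With this in hand I would finish by decomposing $R(g,p')$. In the representative case $g_x<a$ and $g_y<c$ one has $R(g,p')=[g_x,p'_x]\times[g_y,p'_y]$, which splits along the lines $x=a$ and $y=c$ into four axis-aligned pieces: the lower-left piece $[g_x,a]\times[g_y,c]\subseteq R(g,p)$ (because $a\le p_x$ and $c\le p_y$); the interior piece $[a,p'_x]\times[c,p'_y]\subseteq\psi$; and the two side pieces $[a,p'_x]\times[g_y,c]$ and $[g_x,a]\times[c,p'_y]$, each of which lies in $P$ by the downward and leftward propagation statements (using that $R(g,p)$ already certifies the strips $[a,p_x]\times[g_y,c]$ and $[g_x,a]\times[c,p_y]$). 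All four pieces lie in $P$, so $R(g,p')\subseteq P$. The remaining positions of $g$ relative to $\psi$ (to the left of, below, or inside it) are strictly easier sub-cases of the same decomposition, and the three other quadrants follow by symmetry.

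The main obstacle is the propagation statement, and more precisely the justification that the pixel structure forbids any boundary feature from intruding into the slab immediately adjacent to $\psi$. The interval bookkeeping in the last paragraph is routine, but this step is where the definition of the pixelation (rays emitted from reflex vertices, and maximality of pixels) is genuinely used; it must be argued with care so that both interior-versus-exterior transitions and pixel-versus-pixel separators below the side are ruled out.
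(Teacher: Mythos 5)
Your proof is correct, but it is organized quite differently from the paper's. The paper argues top-down: given $g\in\gs{p}$ with $R:=R(g,p)\subseteq P$, it defines $R'$ as the union of all pixels whose interior meets $R$, asserts that $R'$ is again a rectangle inside $P$, observes that $R'\supseteq\psi$ (because $p$ is interior to $\psi$) and $g\in R'$ (because the segment $gp$ cannot run along pixel boundaries near $g$, as $p$ is interior), and concludes $R(g,p')\subseteq R'\subseteq P$ in one step. Your propagation lemma---no vertex of $P$ on the open side of a pixel, and no ray or boundary edge reaching that open side from outside without subdividing the pixel, hence full-width strips below and beside $\psi$---is precisely the content hidden behind the paper's ``clearly $R'$ is again a rectangle and inside $P$,'' so the geometric engine is the same; what differs is the bookkeeping. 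You certify $R(g,p')$ bottom-up by a four-piece decomposition (the $R(g,p)$-piece, the $\psi$-piece, and two propagated slabs), whereas the paper certifies a single enclosing rectangle. Your version has the merit of making the pixelation argument explicit where the paper waves at it; the paper's version is shorter and needs no case analysis on the position of $g$. Your ``strictly easier sub-cases'' (e.g.\ $g_x\in[a,p_x]$) do go through, but note that the propagation abscissa should then be taken as $x_0=p_x\in(a,b)$ rather than $g_x$, since $g_x$ may coincide with $a$; your representative-case text implicitly does this, and it deserves a sentence.

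One genuine (though small and patchable) gap: the paper is deliberately careful about \emph{degenerate} rectangles, since the lemma must hold under both conventions (degenerate $R(g,p')$ does or does not certify guarding). In the paper's proof, $R'$ contains all of $\psi$ and is therefore non-degenerate, which settles the stricter convention at no extra cost. Your argument concludes only $R(g,p')\subseteq P$; under the stricter convention, when $R(g,p')$ has zero area (for instance $g_y=p'_y=c$ with $p'$ on the bottom side of $\psi$), membership $g\in\gs{p'}$ additionally requires a positive-area rectangle $R''$ with $R(g,p')\subset R''\subseteq P$. The patch is immediate from your own machinery---the covering pieces $\psi$ and the propagated slabs have positive width, and e.g.\ $[g_x,a]\times[c,d]\cup\psi$ contains a fat rectangle around the segment---but as written this convention is not addressed, and it should be.
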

\begin{proof}
Let $g\in \gs{p}$ be any point that $r$-guards $p$, so $R:=R(g,p) \subseteq P$.  
Define $R'$ to be the
union of all pixels for which an interior point is in $R$.  Clearly $R'$ is again a
rectangle and inside $P$.  Since $p$ is in the interior of $\psi$, $R'$ contains
all of $\psi$, and in particular is
non-degenerate and contains $p$ and $p'$.  Also, line segment $gp$ cannot run along
pixel-boundaries (since $p$ is in the interior) and so must be in the interior
of a pixel in the vicinity of $g$.  Hence $R'$ contains a pixel that contains $g$,
and $g\in R'$.  So $p',g \in R'$ and $R'$ is non-degenerate,
therefore $g$ $r$-guards $p$ and $\gs{p'}$ contains $g$.
\end{proof}

\begin{lemma}
\label{lem:boundary2corner}
Let $p$ be a point in the interior of a side $e$ of a pixel $\psi$, and let $p'$ be any other point
on $e$ (possibly at the end of $e$).  Then $\gs{p}\subseteq \gs{p'}$.
\end{lemma}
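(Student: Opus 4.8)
The plan is to reduce this to the pixel-case already handled in Lemma~\ref{lem:interior2boundary}. Fix any $g\in\gs{p}$ and set $R:=R(g,p)\subseteq P$; the goal is to show $g\in\gs{p'}$. After a rotation I assume $e$ is horizontal and is the top side of $\psi$, with $p=(x_p,y_0)$, and after a reflection I assume $p'$ lies weakly to the right of $p$ on $e$.

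First I would dispose of the degenerate situation in which $g$ lies on the horizontal line through $e$, so that $R$ is a (possibly length-$0$) segment along that line. Here $R(g,p')$ is again a segment on the same line, and $R(g,p')\subseteq R\cup[p,p']$; since $[p,p']\subseteq e\subseteq P$ and $R\subseteq P$, we get $R(g,p')\subseteq P$, and one checks this is consistent with the degeneracy convention in force. In the remaining cases $g$ lies strictly above or strictly below the line of $e$, so $R$ has positive height and enters the interior of exactly one of the two pixels incident to $e$: namely $\psi$ if $g$ is below, and the pixel $\psi^+$ sitting directly above $e$ if $g$ is above (this pixel exists because $R\subseteq P$ already occupies points just above $e$). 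Call this pixel $\sigma$.

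The core step is then immediate. Choose a point $q$ in the interior of $\sigma$ that lies on $R$ and is arbitrarily close to $p$, say $q=(x_p, y_0\mp\varepsilon)$. Because $q\in R$ and $g$ is a corner of the axis-aligned rectangle $R$, the sub-rectangle $R(g,q)$ is contained in $R\subseteq P$, so $g\in\gs{q}$. Now $q$ is interior to the pixel $\sigma$ and $p'$ lies in $\overline{\sigma}$, so Lemma~\ref{lem:interior2boundary} gives $\gs{q}\subseteq\gs{p'}$ and hence $g\in\gs{p'}$, as desired.

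The one point needing care — and the step I expect to be the real obstacle — is the claim $p'\in\overline{\sigma}$. When $g$ is below this is trivial, since $\sigma=\psi$ and $p'\in e\subseteq\partial\psi$. When $g$ is above I must know that $e$ is a full side of $\sigma=\psi^+$, i.e.\ that $\psi^+$ extends horizontally across all of $e$ so that $x_{p'}$ lies in its $x$-range. This is where the structure of the pixelation enters: the relative interior of the pixel-side $e$ contains no pixel-corner, because such a corner would be the crossing of a vertical chord with the line of $e$, and — since the chords of the pixelation run all the way to $\partial P$ — that chord would continue downward and split $\psi$, contradicting that $e$ is a single side of $\psi$. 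Consequently the region directly above the relative interior of $e$ is a single pixel whose bottom side is exactly $e$, giving $e\subseteq\partial\sigma$ and therefore $p'\in e\subseteq\overline{\sigma}$. I would spell out this structural fact carefully, since it is exactly the property that would fail for a T-junction and is the reason the lemma holds.
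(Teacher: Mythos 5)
Your proof is correct and follows essentially the same route as the paper's: both split on whether $R(g,p)$ is a segment lying along the line of $e$ (handled by extending the segment along $e$, with the same mild hand-wave about the degeneracy convention) or leaves that line, in which case the pixel-expansion idea behind Lemma~\ref{lem:interior2boundary} finishes the job. The only real difference is packaging --- you invoke Lemma~\ref{lem:interior2boundary} as a black box via the intermediate point $q$ and carefully justify the no-T-junction fact that the pixel above $e$ spans all of $e$, whereas the paper re-runs the expansion argument inline and leaves that structural fact implicit in ``we can expand it into a rectangle containing $g$ and the entire pixel-side.''
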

\begin{proof}
Let $g\in \gs{p}$ be any point that $r$-guards $p$, so $R:=R(g,p) \subseteq P$.  
If $R$ contains interior points of some pixels near $p$ and $g$, then as above we can expand it
into a rectangle containing $g$ and the entire pixel-side.  So assume that $R$ contains
no interior points of pixels near $p$ or $g$, which means that it is a (horizontal or vertical)
line segment from $g$ to $p$.  Since $p$ is not an endpoint of $e$, this line
segment runs along $e$, and we can expand it to include all of $e$ and in particular $p'$.
The resulting line segment $R'$ satisfies $R'\subseteq P$, and $R'$ is
degenerate if and only if $R$ was.  Thus $g$ $r$-guards $p'$.
\end{proof}

With this, the proof of Lemma~\ref{lem:simplify} is complete.
Now we prove Lemma~\ref{lem:simplifyGuards} that states that we can reduce
the number of point-guards.
\begin{proof}
We construct the set $\Gamma'$ follows. 
\begin{itemize}
\item For every pixel-corner $c$, if $c\in \Gamma$ then add $c$ to $\Gamma'$.
\item For every pixel-side $e$, if neither endpoint of $e$ is in $\Gamma$, but some interior point of $e$ is in $\Gamma$,
	then add one such point to $\Gamma'$,
\item Finally, for every pixel $\psi$, if no corner and no side has points in $\Gamma$, but the interior of $\psi$
	contains points in $\Gamma$, then add one such point to $\Gamma'$.
\end{itemize}
Clearly, $\Gamma'\subseteq \Gamma$, and $\Gamma'$ contains at most 4 points from each pixel. 
Let $S\subseteq \Gamma$ be a set of point-guards, and define $S'\subseteq \Gamma'$ as follows:
If $g\in S$ is a pixel-corner, then add it to $S'$.  If it is in the interior of a pixel-side $e$,
then some point $g'$ on $e$ (possibly at an endpoint) was added to $\Gamma'$;  add that point to $S'$.
By Lemma~\ref{lem:boundary2corner} we have $\gs{g}\subseteq \gs{g'}$, so point $g'$ guards everything that $g$ did.
Finally if $g\in S$ is an interior point of a pixel $\psi$, then some point $g'$ of $\psi$ (possibly
on the boundary) was added to $\Gamma'$; add that point to $S'$.  
By Lemma~\ref{lem:interior2boundary} point $g'$ guards everything that $g$ did.
So $S'$ guards at least as much as $S$ and $|S'|\leq |S|$, which proves the lemma.
\end{proof}

\end{appendix}

\end{document}